\setlist{nosep}
\newcolumntype{L}[1]{>{\raggedright\let\newline\\\arraybackslash\hspace{0pt}}m{#1}}
\newcolumntype{C}[1]{>{\centering\let\newline\\\arraybackslash\hspace{0pt}}m{#1}}
\newcolumntype{R}[1]{>{\raggedleft\let\newline\\\arraybackslash\hspace{0pt}}m{#1}}
\newtheorem{theorem}{Theorem}
\newtheorem{corollary}[theorem]{Corollary}
\newtheorem{definition}[theorem]{Definition}
\newtheorem{example}[theorem]{Example}
\newcommand{\RN}[1]{%
  \textup{\uppercase\expandafter{\romannumeral#1}}%
}
\providecommand*{\nat}[0]{\ensuremath{\mathbb N}}
\providecommand*{\natp}[0]{\ensuremath{\mathbb N_{\scriptscriptstyle
      +}}}
\providecommand{\customgenericname}{}
\newcommand{\newcustomtheorem}[2]{%
  \newenvironment{#1}[1]
  {%
   \renewcommand\customgenericname{#2}%
   \renewcommand\theinnercustomgeneric{##1}%
   \innercustomgeneric
  }
  {\endinnercustomgeneric}
}
\title{Recurrent Neural Networks as Weighted Language Recognizers}
\author{
Yining Chen\\
Dartmouth College\\
{\tt \small yining.chen.18@dartmouth.edu}\\\And
Sorcha Gilroy \\
ILCC\\
  University of Edinburgh \\
  {\tt \small s.gilroy@sms.ed.ac.uk} \\\And
  Andreas Maletti \\
  Institute of Computer Science\\
  Universit{\"a}t Leipzig \\
  {\tt \small andreas.maletti@uni-leipzig.de} \\\AND
  Jonathan May \\
  Information Sciences Institute\\
  University of Southern California\\
{\tt \small jonmay@isi.edu}\\\And 
  Kevin Knight \\
  Information Sciences Institute\\
  University of Southern California\\
  {\tt \small knight@isi.edu}\\}
\date{}
\begin{document}
\maketitle
\begin{abstract}
  We investigate the computational complexity of various problems for
  simple recurrent neural networks (RNNs) as formal models for 
  recognizing weighted languages.  We focus on the single-layer,
  ReLU-activation, rational-weight RNNs with softmax, which are
  commonly used in natural language processing applications.  We show
  that most problems for such RNNs are undecidable, including
  consistency, equivalence, minimization, and the determination of the
  highest-weighted string.  However, for consistent RNNs the last
  problem becomes decidable, although the solution length can surpass all computable bounds.  If additionally the string is limited to
  polynomial length, the problem becomes NP-complete and APX-hard. In summary, this shows that approximations and
  heuristic algorithms are necessary in practical applications of
  those RNNs.
\end{abstract}

\section{Introduction}
Recurrent neural networks (RNNs) are an attractive apparatus for
probabilistic language modeling~\cite{mikolov_zweig_2012}.  Recent
experiments show that RNNs significantly outperform other methods in
assigning high probability to held-out English text~\cite{limits}.

Roughly speaking, an RNN works as follows.  At each time step, it
consumes one input token, updates its hidden state vector, and
predicts the next token by generating a probability distribution over
all permissible tokens.  The probability of an input string is simply
obtained as the product of the predictions of the tokens constituting
the string followed by a terminating token.  In this manner, each RNN
defines a \emph{weighted language}; i.e. a total function from strings
to weights.  \newcite{siegelmann} showed that single-layer
rational-weight RNNs with saturated linear activation can compute any
computable function.  To this end, a specific architecture with
886~hidden units can simulate any Turing machine in real-time (i.e., 
each Turing machine step is simulated in a single time step).
However, their RNN encodes the whole input in its internal state,
performs the actual computation of the Turing machine when reading the
terminating token, and then encodes the output (provided an output is
produced) in a particular hidden unit.  In this way, their RNN allows
``thinking'' time (equivalent to the computation time of the Turing
machine) after the input has been encoded. 

We consider a different variant of RNNs that is commonly used in
natural language processing applications.  It uses ReLU activations,
consumes an input token at each time step, and produces softmax
predictions for the next token.  It thus immediately halts after
reading the last input token and the weight assigned to the input is
simply the product of the input token predictions in each step. 

Other formal models that are currently used to implement probabilistic
language models such as finite-state automata and context-free
grammars are by now well-understood.  A fair share of their utility
directly derives from their nice algorithmic properties.  For example,
the weighted languages computed by weighted finite-state automata are
closed under intersection (pointwise product) and union (pointwise
sum), and the corresponding unweighted languages are closed under
intersection, union, difference, and
complementation~\cite{droste_kuich_vogler_2013}.  Moreover, toolkits
like OpenFST~\cite{Allauzen2007} and
Carmel\footnote{https://www.isi.edu/licensed-sw/carmel/} implement
efficient algorithms on automata like minimization, intersection,
finding the highest-weighted path and the highest-weighted string. 

RNN practitioners naturally face many of these same problems.  For
example, an RNN-based machine translation system should extract the
highest-weighted output string (i.e., the most likely translation)
generated by an RNN, \cite{sutskever2014sequence,bahdanau2014neural}.
Currently this task is solved by approximation techniques like
heuristic greedy and beam searches. To facilitate the deployment of
large RNNs onto limited memory devices (like mobile phones)
minimization techniques would be beneficial. Again currently only
heuristic approaches like knowledge distillation \cite{rush16} are
available. Meanwhile, it is unclear whether we can determine if the computed
weighted language is consistent; i.e., if it is a probability
distribution on the set of all strings. Without a determination of the
overall probability mass assigned to all finite strings, a fair
comparison of language models with regard to perplexity is simply
impossible.

The goal of this paper is to study the above problems for the mentioned ReLU-variant of RNNs. More specifically, we ask and
answer the following questions:
\begin{itemize}
\item Consistency: Do RNNs compute consistent weighted languages?  Is
  the consistency of the computed weighted language decidable?
\item Highest-weighted string: Can we (efficiently) determine the
  highest-weighted string in a computed weighted language?
\item Equivalence: Can we decide whether two given RNNs compute the
  same weighted language?
\item Minimization: Can we minimize the number of neurons for a given
  RNN?
\end{itemize}

\section{Definitions and notations}
\label{definitions}
Before we introduce our RNN model formally, we recall some basic
notions and notation.  An \emph{alphabet}~$\Sigma$ is a finite set of
symbols, and we write~$\lvert \Sigma \rvert$ for the number of symbols
in~$\Sigma$.  A \emph{string}~$s$ over the alphabet~$\Sigma$ is a
finite sequence of zero or more symbols drawn from~$\Sigma$, and we
write~$\Sigma^*$ for the set of all strings over~$\Sigma$, of which
$\epsilon$~is the empty string.  The length of the string~$s \in
\Sigma^*$ is denoted~$\lvert s \rvert$ and coincides with the number
of symbols constituting the string.  As usual, we write~$A^B$ for the
set of functions $\{f \mid f \colon B \to A\}$.  A \emph{weighted
  language}~$L$ is a total function~$L \colon \Sigma^* \to \mathbb R$
from strings to real-valued weights.  For example, $L(a^n) =e^{-n}$ for all~$n \geq 0$ is such a weighted language.

We restrict the weights in our RNNs to the rational
numbers~$\mathbb{Q}$.  In addition, we reserve the use of a special 
symbol~$\$$ to mark the start and end of an input string.  To this
end, we assume that $\$ \notin \Sigma$ for all considered alphabets,
and we let $\Sigma_{\$} = \Sigma \cup \{\$\}$.

\begin{definition}
A single-layer RNN~$R$ is a $7$-tuple $\langle \Sigma, N, h_{-1}, W,
W', E, E' \rangle$, in which
\begin{itemize}
\item $\Sigma$~is an \emph{input alphabet}, 
\item $N$~is a finite set of \emph{neurons},
\item $h_{-1} \in \mathbb Q^N$ is an \emph{initial activation vector},
\item $W \in \mathbb Q^{N \times N}$~is a \emph{transition matrix},
\item $W' = (W'_a)_{a \in \Sigma_{\$}}$ is a $\Sigma_{\$}$-indexed
  family of \emph{bias vectors}~$W'_a \in \mathbb Q^N$,
\item $E \in \mathbb Q^{\Sigma_{\$} \times N}$ is a \emph{prediction
    matrix}, and
\item $E' \in \mathbb Q^{\Sigma_{\$}}$ is a \emph{prediction bias
    vector}.
\end{itemize}
\end{definition}

Next, let us define how such an RNN works.  We first prepare our
input encoding and the effect of our activation function.  For an
input string~$s = s_1 s_2 \cdots s_n \in \Sigma^*$ with~$s_1, \dotsc,
s_n \in \Sigma$, we encode this input as $\$s\$$ and thus assume that
$s_0 = \$$ and $s_{n+1} = \$$.  Our RNNs use ReLUs (Rectified Linear
Units), so for every~$v \in \mathbb Q^N$ we let~$\sigma \langle v
\rangle$ (the ReLU activation) be the vector $\sigma \langle v
\rangle \in \mathbb Q^N$ such that 
\[ \sigma \langle v\rangle(n) = \max \bigl(0, v(n) \bigr) \tag*{for
    every $n \in N$.} \]
In other words, the ReLUs act like identities on nonnegative inputs,
but clip negative inputs to~$0$.  We use softmax-predictions, so for
every vector~$p \in \mathbb Q^{\Sigma_{\$}}$ and $a \in \Sigma_{\$}$
we let
\[ \text{softmax} \langle p \rangle(a) = \frac{e^{p(a)}}{\sum_{a' \in
      \Sigma_{\$}} e^{p(a')}} \enspace. \] 
RNNs act in discrete time steps reading a single letter at each step.
We now define the semantics of our RNNs.

\begin{definition}
  Let $R = \langle \Sigma, N, h_{-1}, W, W', E, E'\rangle$ be an RNN,
  $s$~an input string of length~$n$ and $0 \leq t \leq n$ a time
  step.  We define
  \begin{itemize}
  \item the \emph{hidden state vector}~$h_{s, t} \in \mathbb Q^N$
    given by
    \[ h_{s,t} = \sigma \langle W \cdot h_{s, t-1} + W'_{s_t} \rangle
      \enspace, \]
    where $h_{s, -1} = h_{-1}$ and we use standard matrix product and
    point-wise vector addition,
  \item the \emph{next-token prediction vector}~$E_{s, t} \in \mathbb
    Q^{\Sigma_{\$}}$
    \[ E_{s,t} = E \cdot h_{s,t} + E' \]
  \item the \emph{next-token distribution}~$E'_{s, t} \in \mathbb
    R^{\Sigma_{\$}}$ 
    \[ E'_{s,t} = \text{\upshape softmax} \langle E_{s,t} \rangle
      \enspace. \]
  \end{itemize}
  Finally, the RNN~$R$ computes the weighted language~$R \colon
  \Sigma^* \to \mathbb R$, which is given for every input~$s = s_1
  \dotsm s_n$ as above by
  \[ R(s) = \prod_{t=0}^n E'_{s,t}(s_{t+1}) \enspace. \]
\end{definition}

In other words, each component~$h_{s, t}(n)$ of the hidden state
vector is the ReLU activation applied to a linear combination of all
the components of the previous hidden state vector~$h_{s, t-1}$
together with a summand~$W'_{s_t}$ that depends on the $t$-th input
letter~$s_t$.  Thus, we often specify~$h_{s, t}(n)$ as linear
combination instead of specifying the matrix~$W$ and the
vectors~$W'_a$.  The semantics is then obtained by predicting the
letters~$s_1, \dotsc, s_n$ of the input~$s$ and the final
terminator~$\$$ and multiplying the probabilities of the individual
predictions.

Let us illustrate these notions on an example.  We consider the
RNN~$\langle \Sigma, N, h_{-1}, W, W', E, E'\rangle$ with $\gamma \in
\mathbb Q$ and
\begin{itemize}
\item $\Sigma = \{a\}$ and $N = \{1, 2\}$,
\item $h_{-1} = (-1, 0)^T$ and
  \[ W =
    \begin{pmatrix}
      1 & 0 \\
      1 & 0 
    \end{pmatrix}
    \quad \text{and} \quad
    W'_{\$} = W'_a = 
    \begin{pmatrix}
      1 \\ 0
    \end{pmatrix}
  \]
\item $E(\$, \cdot) = (M + 1,\, -(M + 1))$ and $E(a, \cdot)
  = (1,\, -1)$ and 
\item $E'(\$) = -M$ and $E'(a) = 0$.
\end{itemize}
In this case, we obtain the linear combinations
\[ h_{s, t} = \sigma \Biggl\langle
  \begin{matrix}
    h_{s, t-1}(1) + 1 \\
    h_{s, t-1}(1)
  \end{matrix}
  \Biggr\rangle \]
computing the next hidden state components.  Given the initial
activation, we thus obtain $h_{s, t} = \sigma \langle t, t-1\rangle$.
Using this information, we obtain
\begin{align*}
  E_{s, t}({\$})
  &= (M+1) \cdot ( t - \sigma \langle t-1\rangle) - M \\ 
  E_{s, t}(a)
  &= t - \sigma \langle t-1\rangle \enspace.
\end{align*}
Consequently, we assign weight~$\tfrac{e^{-M}}{1+e^{-M}}$
to input~$\varepsilon$, weight~$\tfrac{1}{1+e^{-M}} \cdot 
\tfrac{e^1}{e^1 + e^1}$ to~$a$, and, more generally,
weight~$\tfrac{1}{1+e^{-M}} \cdot \tfrac 1{2^n}$ to~$a^n$.

Clearly the weight assigned by an RNN is always in the interval~$(0,1)$, which enables a probabilistic view. Similar to
weighted finite-state automata or weighted context-free grammars, each
RNN is a compact, finite representation of a weighted language. The softmax-operation enforces that the probability~$0$ is impossible as assigned weight, so each input string is principally possible.  In
practical language modeling, smoothing methods are used to change
distributions such that impossibility (probability~$0$) is removed.
Our RNNs avoid impossibility outright, so this can be considered a
feature instead of a disadvantage.

The hidden state~$h_{s,t}$ of an RNN can be used as scratch space for
computation.  For example, with a single neuron~$n$ we can count input
symbols in~$s$ via: 
\[ h_{s, t}(n) = \sigma \langle h_{s, t-1}(n) + 1\rangle \enspace. \]
Here the letter-dependent summand~$W'_a$ is universally~$1$.
Similarly, for an alphabet $\Sigma = \{a_1, \dotsc, a_m\}$ we can
use the method of \newcite{siegelmann} to encode the complete
input string~$s$ in base~$m+1$ using:
\[ h_{s,t}(n) = \sigma \langle (m+1) h_{s, t-1}(n) + c(s_t) \rangle
  \enspace, \] 
where $c \colon \Sigma_{\$} \to \{0, \dotsc, m\}$ is a bijection.  In
principle, we can thus store the entire input string (of unbounded
length) in the hidden state value~$h_{s,t}(n)$, but our RNN model
outputs weights at each step and terminates immediately once the final
delimiter~$\$$ is read.  It must assign a probability to a string
\emph{incrementally} using the chain rule decomposition $p(s_1
\dotsm s_n) = p(s_1) \cdot \ldots \cdot p(s_n \mid s_1 \dotsm
s_{n-1})$.

\begin{figure*}[t]
  \begin{center}
    \begin{tabular}{|l|c|c|c|} \hline
      & $R_1(a^n) = 2^{-(n+1)}$
      & $R_2(\varepsilon) \approx 0$
      & $R_3(a^{100}) \approx 1$\\
      && $R_2(a^n) \approx 2^{-n} \; (n \geq 1)$
      & $R_3(a^n) \approx 0 \; (n \neq 100)$ \\ \hline
      $N$ & $\{1\}$ & $\{1,2\}$ & $\{1,2,3\}$ \\ \hline
      $h_{-1}$
      & $\begin{pmatrix} 0 \end{pmatrix}$
      & $\begin{pmatrix} -1 \\ 0 \end{pmatrix}$
      & $\begin{pmatrix} 0 \\ 0 \\ 0 \end{pmatrix}$ \\ \hline
      $W$
      & $\begin{pmatrix} 0 \end{pmatrix}$
      & $\begin{pmatrix} 1 & 0 \\ 1 & 0 \end{pmatrix}$
      & $\begin{pmatrix} 0 & 0 & 1 \\ 0 & 0 & 1 \\ 0 & 0 &
        1 \end{pmatrix}$ \\ \hline
      $W'_{\$} \quad W'_a$
      & $\begin{pmatrix} 0 \end{pmatrix} \quad
        \begin{pmatrix} 0 \end{pmatrix}$
      & $\begin{pmatrix} 1 \\ 0 \end{pmatrix} \quad
      \begin{pmatrix} 1 \\ 0 \end{pmatrix}$
      & $\begin{pmatrix} -99 \\ -100 \\ 1 \end{pmatrix} \quad
      \begin{pmatrix} -99 \\ -100 \\ 1 \end{pmatrix}$ \\ \hline
      $E_{\$} \quad E_a$
      & $\begin{pmatrix} 0 \end{pmatrix} \quad
        \begin{pmatrix} 0 \end{pmatrix}$
      & $\begin{pmatrix}  M +1 \\ -(M + 1) \end{pmatrix} \quad \begin{pmatrix} 1 \\ -1 \end{pmatrix}$
      & $\begin{pmatrix} M \\ -M \\ 0 \end{pmatrix} \quad
        \begin{pmatrix} -M \\ M \\ 0 \end{pmatrix}$ \\
      \hline
      $E'_{\$} \quad E'_a$
      & $0 \quad 0$
      & $-M \quad 0$
      & $-M \quad 0$ \\ \hline
      & \includegraphics[width=0.27\textwidth]{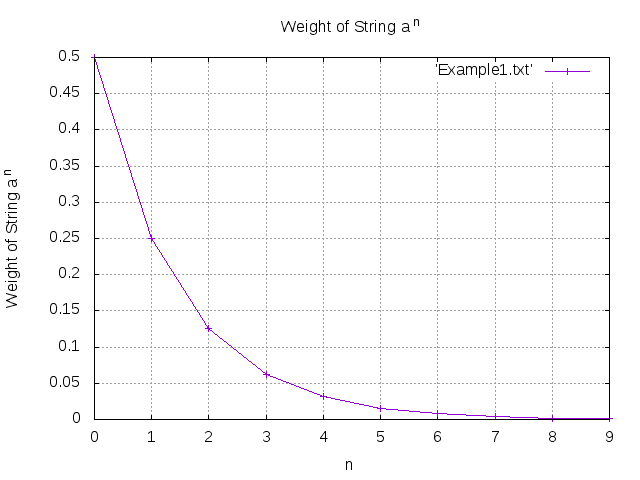}
      & \includegraphics[width=0.27\textwidth]{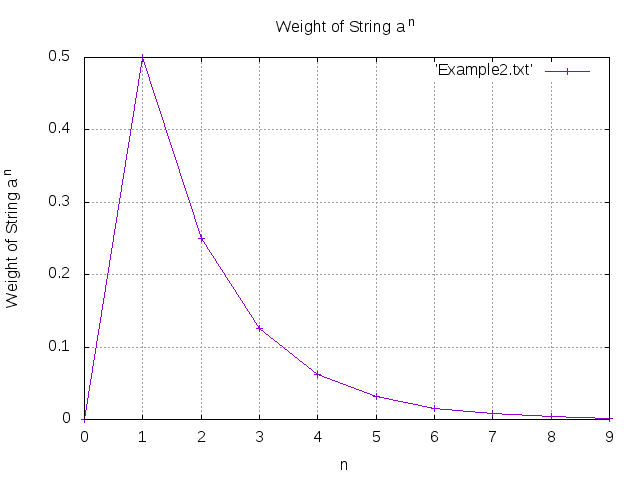}
      & \includegraphics[width=0.27\textwidth]{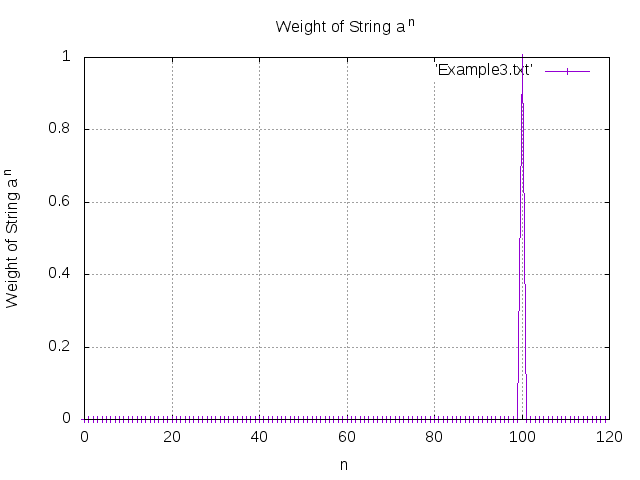} \\ \hline
    \end{tabular}
  \end{center}
  \caption{Sample RNNs over single-letter alphabets, and the weighted languages they recognize. $M$ is some positive rational number which depends on the desired error margin. If we want to express the second and the third languages with error margin $\delta$, $M$ is chosen so that $M > -\ln\frac{\delta}{1-\delta}$ in column $2$, and chosen so that $(1+e^{-M})^{100} < \frac{1}{1-\delta}$ in column $3$.}
  \label{examples}
\end{figure*}

Let us illustrate our notion of RNNs on some additional examples.
They all use the alphabet~$\Sigma = \{a\}$ and are illustrated and
formally specified in Figure~\ref{examples}.  The first column shows
an RNN~$R_1$ that assigns~$R_1(a^n) = 2^{-(n+1)}$.  The
next-token prediction matrix ensures equal values for $a$~and~$\$$ at
every time step.  The second column shows the RNN~$R_2$, which we
already discussed.  In the beginning, it heavily biases the next
symbol prediction towards~$a$, but counters it starting at~$t=1$.  The
third RNN~$R_3$ uses another counting mechanism with $h_{s,t} =
\sigma \langle t-100, t-101, t\rangle$.  The first two components are
ReLU-thresholded to zero until~$t > 101$, at which point they
overwhelm the bias towards~$a$ turning all future predictions to~$\$$. 

\section{Consistency}
\label{consistency}
We first investigate the consistency problem for an RNN~$R$, which
asks whether the recognized weighted language~$R$ is indeed a
probability distribution.  Consequently, an RNN~$R$ is
\emph{consistent} if $\sum_{s \in \Sigma^*} R(s) = 1$.  We first show
that there is an inconsistent RNN, which together with our examples
shows that consistency is a nontrivial property of
RNNs.\footnote{\label{footnotepcfg} For comparison, all probabilistic
  finite-state automata are consistent, provided no transitions exit
  final states.  Not all probabilistic context-free grammars are
  consistent; necessary and sufficient conditions for consistency are
  given by \newcite{booth_thompson_1973}.  However, probabilistic
  context-free grammars obtained by training on a finite corpus using
  popular methods (such as expectation-maximization) are guaranteed to
  be consistent~\cite{nederhof_satta_2006}.}

We immediately use a slightly more complex example, which we will
later reuse.

\begin{example} \upshape
  \label{ex:consistent}
  Let us consider an arbitrary RNN
  \[ R = \langle \Sigma, N, h_{-1}, W, W', E, E'\rangle \]
  with the single-letter alphabet~$\Sigma = \{a\}$, the neurons~$\{1, 2,
  3, n, n'\} \subseteq N$, initial activation $h_{-1}(i) = 0$ for all $i
  \in \{1, 2, 3, n, n'\}$, and the following linear combinations:
  \begin{align*}
    h_{s, t}(1)
    &= \sigma \langle h_{s, t-1}(1) + h_{s, t-1}(n) - h_{s, t-1}(n')
      \rangle \\
    h_{s, t}(2)
    &= \sigma \langle h_{s, t-1}(2) + 1\rangle \\
    h_{s, t}(3)
    &= \sigma \langle h_{s, t-1}(3) + 3 h_{s, t-1}(1) \rangle \\[1ex]
    E_{s, t}(\$)
    &= h_{s, t}(3) - h_{s, t}(2) \\
    E_{s, t}(a)
    &= h_{s, t}(2)
  \end{align*}
  Now we distinguish two cases:  \\
  \textbf{Case~1:} If $h_{s, t}(n) - h_{s, t}(n') = 0$ for all $t \in
  \nat$, then $h_{s, t}(1) = 0$ and $h_{s, t}(2) = t + 1$ and $h_{s,
    t}(3) = 0$.  Hence we have $E_{s, t}(\$) = -(t+1)$ and $E_{s,
    t}(a) = t+1$.  In this case the termination probability
  \[ E'_{s, t}(\$) = \frac{e^{-(t+1)}}{e^{-(t+1)} + e^{t+1}} = \frac
    1{1+e^{2(t+1)}} \]
  (i.e., the likelihood of predicting~$\$$) shrinks rapidly
  towards~$0$, so the RNN assigns less than 15\% of the probability
  mass to the terminating sequences (i.e., the finite strings), so the
  RNN is inconsistent (see Lemma 15
  in the
  appendix). \\[1ex]
  \textbf{Case 2:} Suppose that there exists a time point~$T \in \nat$
  such that for all $t \in \nat$
  \[ h_{s, t}(n) - h_{s, t}(n') =
    \begin{cases}
      1 & \text{if } t = T \\
      0 & \text{otherwise.}
    \end{cases} \]
  Then $h_{s, t}(1) = 0$ for all~$t \leq T$ and $h_{s, t}(1) = 1$
  otherwise.  In addition, we have $h_{s, t}(2) = t + 1$ and $h_{s,
    t}(3) = \sigma\langle 3(t-T-1)\rangle$.  Hence we have
  \begin{align*}
    E_{s, t}(\$)
    &= \sigma\langle 3(t-T-1)\rangle - (t+1) \\
    &=
      \begin{cases}
        -(t+1) & \text{if } t \leq T \\
        2t - 3T - 4 & \text{otherwise}
      \end{cases} \\
    E_{s, t}(a)
    &= t + 1 \enspace,
  \end{align*}
  which shows that the probability
  \[ E'_{s, t}(\$) =
    \begin{cases}
      \frac{1}{1+e^{2(t+1)}} & \text{ if } t \leq T \\[1ex]
      \frac{e^{t-3T-5}}{1+e^{t-3T-5}} & \text{otherwise}
    \end{cases}
  \]
  of predicting~$\$$ increases over time and eventually (for $t \gg
  3T$) far outweighs the probability of predicting~$a$.  Consequently,
  in this case the RNN is consistent (see Lemma 16
in the appendix).
\end{example}

We have seen in the previous example that consistency is not trivial
for RNNs, which takes us to the consistency problem for RNNs: 

\paragraph{Consistency:} Given an RNN~$R$, return ``yes'' if $R$~is
consistent and ``no'' otherwise.

\medskip
We recall the following theorem, which, combined with our
example, will prove that consistency is unfortunately undecidable for
RNNs.

\begin{theorem}[Theorem~2 of \newcite{siegelmann}]
  \label{lm-TuringComplete}
  Let $M$~be an arbitrary deterministic Turing machine.  There exists
  an RNN
  \[ R = \langle \Sigma, N, h_{-1}, W, W', E, E'\rangle \]
  with saturated linear activation, input alphabet~$\Sigma = \{a\}$,
  and $1$~designated neuron~$n \in N$ such that for all $s \in
  \Sigma^*$ and $0 \leq t \leq \lvert s\rvert$
  \begin{itemize}
  \item $h_{s, t}(n) = 0$ if $M$~does not halt on~$\varepsilon$, and
  \item if $M$~does halt on empty input after $T$~steps, then
    \[ h_{s, t}(n) =
      \begin{cases}
        1 & \text{if } t = T \\
        0 & \text{otherwise.}
      \end{cases}
    \]
  \end{itemize}
\end{theorem}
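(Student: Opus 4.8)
The plan is to treat this as a faithful real-time simulation of $M$ on empty input, building on the construction of \newcite{siegelmann}. Since the alphabet is unary, every input symbol carries no information, so the successive time steps $t = 0, 1, 2, \dots$ serve merely as a clock that advances the simulated computation by one $M$-step per RNN-step, and the designated neuron is used purely as a halting indicator. First I would fix a standard encoding of a Turing-machine configuration into a bounded number of rational-valued neurons: the unbounded tape is split at the head into a left part and a right part, and each part is stored as a single rational in $[0,1]$ under a Cantor-style base-$4$ encoding that uses only two of the four digits (with an analogous larger base for non-binary tape alphabets). This is exactly the encoding that makes the top tape symbol recoverable by the saturated-linear activation $x \mapsto \min(1, \max(0, x))$: multiplying by the base shifts the leading digit into a range that can be thresholded cleanly, so that no rounding error accumulates over the run.

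Next I would realize the elementary stack operations --- reading the top symbol, popping, and pushing --- as fixed finite sub-networks of saturated-linear neurons, as in the Siegelmann--Sontag construction: reading is a multiply-and-threshold, popping a multiply-subtract-digit, and pushing a divide-and-add-digit, each using only a constant number of neurons. The finite control of $M$ is encoded in one-hot form across one neuron per state, and the transition function, being a finite table, is written as a finite linear combination of the current state indicators and the decoded head symbol followed by a single application of the activation. Assembling these pieces fixes the transition matrix~$W$ and the input-independent bias vectors $W'_a = W'_{\$}$, so that one update $W \cdot h + W'_a$ followed by the saturated-linear activation carries out exactly one step of $M$; iterating over the time steps therefore reproduces the whole run of $M$ on~$\varepsilon$ inside the hidden state, independently of which string~$s$ (of the given length) is actually read.

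It then remains to produce the one-shot halting neuron~$n$ with the required profile. I would first build a monotone \emph{halt flag} $f$ that equals $0$ while $M$ has not yet entered a halting state and equals $1$ from the halting step onward; this is a saturated-linear readout of the state neurons, and since a halted machine stays halted, $f$ never returns to~$0$. I would then keep a one-step-delayed copy $f^{-}$ of this flag in a separate neuron and define $n$ by the saturated difference $h_{s,t}(n) = \min\bigl(1, \max(0, f - f^{-})\bigr)$. If $M$ never halts then $f$ is identically~$0$, hence $n$ is identically~$0$; if $M$ halts after $T$ steps then $f - f^{-}$ equals $1$ precisely at the step where the flag rises from $0$ to~$1$, namely $t = T$, and equals $0$ at every other~$t$, which is exactly the behavior the theorem demands.

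The main obstacle is not the halting-neuron bookkeeping but the correctness of the underlying simulation: one must verify that the base-$4$ stack encoding together with the saturated-linear maps reproduces each Turing-machine step \emph{exactly}, with no loss of precision across arbitrarily many steps, and that the entire update fits into a single matrix-plus-bias-plus-activation application so that the simulation is genuinely real-time (which is what makes the halting time equal to the RNN time step~$T$ rather than a constant multiple of it). These are precisely the claims established by \newcite{siegelmann}, so the bulk of the effort is to invoke their construction and then layer the rising-edge detector on top to convert their persistent halting signal into the single pulse at time~$T$.
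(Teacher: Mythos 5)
Your proposal is correct and takes essentially the same route as the paper: the paper supplies no proof of this statement at all, importing it directly as Theorem~2 of \newcite{siegelmann}, and your sketch is a faithful reconstruction of exactly that cited construction (Cantor-style stack encoding into rationals, saturated-linear subnetworks for read/push/pop, one-hot control, real-time simulation), with only a routine rising-edge detector layered on top to shape the persistent halting signal into the single pulse at time~$T$. Since the bulk of your argument, like the paper's, ultimately defers to the Siegelmann--Sontag construction, the two approaches coincide.
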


In other words, such RNNs with saturated linear activation can
semi-decide halting of an arbitrary Turing machine in the sense that a
particular neuron achieves value~$1$ at some point during the
evolution if and only if the Turing machine halts on empty input. 
An RNN with saturated linear activation is an RNN following our
definition with the only difference that instead of our
ReLU-activation~$\sigma$ the following saturated linear
activation~$\sigma' \colon \mathbb Q^N \to \mathbb Q^N$ is used.  For
every vector~$v \in \mathbb Q^N$ and $n \in N$, let
\[ \sigma'\langle v\rangle(n) =
  \begin{cases}
    0 & \text{if } v(n) < 0 \\
    v(n) & \text{if } 0 \leq v(n) \leq 1 \\
    1 & \text{if } v(n) > 1 \enspace.
  \end{cases}
\]


Since $\sigma'\langle v\rangle = \sigma\langle v\rangle -
\sigma\langle v- \vec 1\rangle$ for all $v \in \mathbb Q^N$, and the
right-hand side is a linear transformation, we can easily simulate
saturated linear activation in our RNNs.  To this end, each neuron~$n 
\in N$ of the original RNN~$R = \langle \Sigma, N, h_{-1}, U, U', E,
E'\rangle$ is replaced by two neurons $n_1$~and~$n_2$ in the new
RNN~$R' = \langle \Sigma, N', h'_{-1}, V, V', F, F'\rangle$ such that
$h_{s, t}(n) = h'_{s, t}(n_1) - h'_{s, t}(n_2)$ for all $s \in
\Sigma^*$ and $0 \leq t \leq \lvert s\rvert$, where the evaluation
of~$h'_{s,t}$ is performed in the RNN~$R'$.  More precisely, we
use the transition matrix~$V$ and bias function~$V'$, which is
given by 
\begin{align*}
  V(n_1, n'_1)
  &= V(n_2, n'_1) = U(n, n') \\
  V(n_1, n'_2) 
  &= V(n_2, n'_2) = -U(n, n') \\
  V'_a(n_1)
  &= U'_a(n) \\
  V'_a(n_2)
  &= U'_a(n) - 1 \\
  h'_{-1}(n_1)
  &= h_{-1}(n) \\
  h'_{-1}(n_2)
  &= 0
\end{align*}
for all $n, n' \in N$ and $a \in \Sigma \cup \{\$\}$, where
$n_1$~and~$n_2$ are the two neurons corresponding to~$n$ and
$n'_1$~and~$n'_2$ are the two neurons corresponding to~$n'$ (see
Lemma 17
in the appendix).

\begin{corollary}
  \label{cor:sim}
  Let $M$~be an arbitrary deterministic Turing machine.  There exists
  an RNN
  \[ R = \langle \Sigma, N, h_{-1}, W, W', E, E'\rangle \]
  with input alphabet~$\Sigma = \{a\}$ and $2$~designated
  neurons~$n_1, n_2 \in N$ such that for all $s \in \Sigma^*$ and $0
  \leq t \leq \lvert s\rvert$
  \begin{itemize}
  \item $h_{s, t}(n_1) - h_{s, t}(n_2) = 0$ if $M$~does not halt
    on~$\varepsilon$, and 
  \item if $M$~does halt on empty input after $T$~steps, then
    \[ h_{s, t}(n_1) - h_{s, t}(n_2) =
      \begin{cases}
        1 & \text{if } t = T \\
        0 & \text{otherwise.}
      \end{cases}
    \]
  \end{itemize}
\end{corollary}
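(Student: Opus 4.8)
The plan is to obtain the desired ReLU RNN as the simulation of the saturated-linear RNN supplied by Theorem~\ref{lm-TuringComplete}. First I would apply Theorem~\ref{lm-TuringComplete} to the given Turing machine~$M$ to produce a saturated-linear RNN $R_0 = \langle \Sigma, N, h_{-1}, U, U', E, E'\rangle$ with a single designated neuron~$n$ whose value is~$1$ at step~$T$ (if $M$~halts after $T$~steps on~$\varepsilon$) and $0$ otherwise, and identically~$0$ if $M$~never halts. I would then take $R$ to be the ReLU RNN $R' = \langle \Sigma, N', h'_{-1}, V, V', F, F'\rangle$ defined by the construction already displayed before the corollary: each neuron~$m \in N$ is split into a pair~$m_1, m_2 \in N'$ with the stated entries of $V$, $V'$, and $h'_{-1}$. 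The two designated neurons are the copies $n_1, n_2$ of~$n$; the prediction components $F, F'$ are immaterial for this statement, as it concerns only hidden-state values.

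The heart of the argument is the simulation invariant
\[ h_{s,t}(m) = h'_{s,t}(m_1) - h'_{s,t}(m_2) \qquad \text{for all } m \in N, \]
which I would prove by induction on~$t$ starting from~$t = -1$. The crucial point is that the invariant must be maintained for \emph{every} neuron simultaneously, not merely the designated one, because the ReLU recurrence for any single pair in~$R'$ depends on the differences of all pairs at the previous step. The base case is immediate from $h'_{-1}(m_1) = h_{-1}(m)$ and $h'_{-1}(m_2) = 0$. For the inductive step I would compute the pre-activations of $m_1$ and $m_2$ in~$R'$; substituting the definitions of $V$ and $V'$ and folding in the induction hypothesis $h_{s,t-1}(m') = h'_{s,t-1}(m'_1) - h'_{s,t-1}(m'_2)$, the two pre-activations collapse to $a$ and $a - 1$, where $a = \sum_{m'} U(m,m')\, h_{s,t-1}(m') + U'_{s_t}(m)$ is exactly the pre-activation of~$m$ in~$R_0$. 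Applying~$\sigma$ and subtracting yields $\sigma\langle a\rangle - \sigma\langle a - 1\rangle$, which equals $\sigma'\langle a\rangle = h_{s,t}(m)$ by the pointwise identity $\sigma'\langle v\rangle = \sigma\langle v\rangle - \sigma\langle v - \vec 1\rangle$ recorded above.

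With the invariant established, specializing to the designated neuron gives $h'_{s,t}(n_1) - h'_{s,t}(n_2) = h_{s,t}(n)$ for all~$s$ and all $0 \leq t \leq \lvert s\rvert$, so the behavior of~$n$ guaranteed by Theorem~\ref{lm-TuringComplete} transfers verbatim to the difference $h'_{s,t}(n_1) - h'_{s,t}(n_2)$, yielding both claimed cases. I do not expect a genuine obstacle: the construction is fully specified before the corollary, and the work reduces to the careful bookkeeping of the inductive verification (this is Lemma~17 in the appendix). The only points demanding attention are the scope of the induction hypothesis---tracking all neuron pairs rather than just the designated one---and the algebraic decomposition of the saturated-linear activation into a difference of two ReLUs, which is what makes the linear duplication faithful.
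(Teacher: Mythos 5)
Your proposal is correct and follows essentially the same route as the paper: the paper's construction before the corollary splits each neuron into a pair via the identity $\sigma'\langle v\rangle = \sigma\langle v\rangle - \sigma\langle v - \vec 1\rangle$, and its Lemma~17 carries out exactly the induction you describe, maintaining $h_{s,t}(m) = h'_{s,t}(m_1) - h'_{s,t}(m_2)$ for all neurons simultaneously and collapsing the two pre-activations to $a$ and $a-1$. Your emphasis on the full-vector induction hypothesis and on applying Theorem~\ref{lm-TuringComplete} first matches the paper's argument point for point.
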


We can now use this corollary together with the RNN~$R$ of
Example~\ref{ex:consistent} to show that the consistency problem is
undecidable.  To this end, we simulate a given Turing machine~$M$ and
identify the two designated neurons of Corollary~\ref{cor:sim} as 
$n$~and~$n'$ in Example~\ref{ex:consistent}.  It follows that
$M$~halts if and only if $R$~is consistent.  Hence we reduced the
undecidable halting problem to the consistency problem, which shows
the undecidability of the consistency problem.

\begin{theorem}
  \label{thm:consistent}
  The consistency problem for RNNs is undecidable.
\end{theorem}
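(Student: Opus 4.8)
The plan is to reduce the halting problem for deterministic Turing machines on empty input to the consistency problem, and thereby inherit its undecidability. Given an arbitrary deterministic Turing machine~$M$, I would assemble a single RNN~$R_M$ by merging the simulation RNN supplied by Corollary~\ref{cor:sim} with the RNN skeleton of Example~\ref{ex:consistent}. Concretely, I take the neurons and transition data of Corollary~\ref{cor:sim} that realize the designated pair~$n_1, n_2$ and identify them with the neurons~$n$ and~$n'$ appearing in the linear combinations of Example~\ref{ex:consistent}; the remaining neurons~$1, 2, 3$ together with the prediction vectors~$E$ and~$E'$ are copied verbatim from the Example. Since the union of the two neuron sets is finite and the combined transition and bias entries are rational, $R_M$ is again a legitimate RNN under our definition, and the map $M \mapsto R_M$ is plainly computable.

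The correctness of the reduction then rests on two observations I would verify in turn. First, after the identification, the quantity $h_{s, t}(n) - h_{s, t}(n')$ that drives neuron~$1$ in Example~\ref{ex:consistent} is exactly the difference $h_{s, t}(n_1) - h_{s, t}(n_2)$ controlled by Corollary~\ref{cor:sim}, so its time profile is precisely the dichotomy guaranteed there. Second, I would feed this profile into the two cases already analyzed in the Example: if $M$ does not halt on~$\varepsilon$, the difference is identically~$0$, matching Case~1 and yielding an inconsistent~$R_M$; if $M$ halts after $T$~steps, the difference is the indicator of~$t = T$, matching Case~2 and yielding a consistent~$R_M$. Combining the two, $M$~halts on the empty input if and only if $R_M$~is consistent.

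The delicate point is confirming that embedding the simulation into the larger RNN does not perturb the guaranteed behavior of the designated neurons. This amounts to checking that the coupling between the two subsystems is strictly one-directional: neurons~$1, 2, 3$ read from~$n$ and~$n'$, but the simulation neurons of Corollary~\ref{cor:sim} must evolve autonomously and receive no feedback from neurons~$1, 2, 3$. Inspecting the linear combinations in Example~\ref{ex:consistent} shows precisely this independence, so the designated-neuron guarantee of the Corollary holds unchanged inside~$R_M$ and the case analysis transfers without modification. Consequently, a decision procedure for consistency would decide halting on the empty input, contradicting its undecidability; hence the consistency problem for RNNs is undecidable.
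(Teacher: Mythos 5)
Your proposal is correct and follows essentially the same route as the paper's own argument: reduce from the halting problem by identifying the designated neurons of Corollary~\ref{cor:sim} with the neurons~$n$ and~$n'$ of Example~\ref{ex:consistent}, so that the case analysis there yields consistency exactly when~$M$ halts. Your explicit verification that the coupling between the simulation neurons and the neurons~$1,2,3$ is one-directional is a welcome elaboration of a point the paper leaves implicit, but it does not change the structure of the proof.
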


As mentioned in Footnote~\ref{footnotepcfg}, probabilistic
context-free grammars obtained after training on a finite corpus using
the most popular methods are guaranteed to be consistent.  At least
for 2-layer RNNs this does not hold.

\begin{theorem}
  \label{backprop_inconsistent}
  A two-layer RNN trained to a local optimum using
  Back-propagation-through-time~(BPTT) on a finite corpus is not
  necessarily consistent.
\end{theorem}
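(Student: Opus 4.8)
The plan is to exhibit a single finite corpus $C$ and a single two-layer RNN $R^*$ that is simultaneously a local optimum of the BPTT training loss on $C$ and inconsistent; since the theorem only claims that consistency is \emph{not guaranteed}, one such witness suffices. I write the BPTT loss as the negative corpus log-likelihood $\mathcal L = -\sum_{s \in C} \log R(s)$, and I exploit the defining feature of a two-layer network: the softmax logits are read off the \emph{top} (feed-forward ReLU) layer only, while a counter that grows with the time step lives in the \emph{bottom} (recurrent) layer and reaches the predictions exclusively through the top layer. Concretely I take $\Sigma = \{a\}$ and $C = \{a\}$, so the longest corpus string has length $L = 1$. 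I first pin down the base behaviour by keeping every top-layer ReLU unit dead while the machine reads any corpus string, so that for $t \le L$ the logits equal the prediction bias $E'$ alone and the model acts like a fixed i.i.d.\ source with $\pi = \mathrm{softmax}\langle E'\rangle$. Choosing $E'$ so that $\pi$ equals the empirical token frequencies of $C$ (here $\pi(a) = \pi(\$) = \tfrac12$, attained at finite interior $E'$) places the base parameters at the unique minimiser of a strictly convex categorical cross-entropy, hence at a strict local minimum in the $E'$ directions with vanishing gradient.

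Next I install the inconsistency gadget without disturbing the corpus loss. Using the Siegelmann-style counting recurrence recalled in the text, $h_{s,t}(n) = \sigma\langle h_{s,t-1}(n) + 1\rangle$, the bottom layer maintains a neuron whose value grows linearly with $t$. A single top-layer unit $g$ then computes $g_{s,t} = \sigma\langle t - L - \tfrac12\rangle$, which is strictly negative, hence dead, for every $t \le L$ but becomes active for $t > L$; I wire $g$ into the terminator logit with a large negative weight $-K$ and into the $a$-logit with weight $0$. For $t \le L$ the unit $g$ contributes nothing, so the corpus loss and the base analysis are untouched; for $t > L$ the terminator logit decreases linearly in $t$, so the stopping probability $q_t = E'_{s,t}(\$)$ decays geometrically. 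The offset $\tfrac12$ keeps $g$ \emph{strictly} dead at the boundary step $t = L$, which will matter for the gradient.

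The crux is verifying that $R^*$ is a genuine local optimum, i.e.\ that \emph{no} coordinate admits a descent direction; I expect this to be the main obstacle. The coordinate $E'$ is handled above. Every remaining parameter — the counter's recurrent weights, the weights and bias feeding $g$, and the readout weights from the top layer — influences $\mathcal L$ only through top-layer units that are strictly dead on all of $C$, where the ReLU derivative is $0$; thus each such partial vanishes, and a small perturbation leaves $g$ dead on $C$, so $\mathcal L$ is in fact \emph{locally constant} along these directions. Consequently $\mathcal L$ in a neighbourhood depends only on $E'$, where it is strictly convex and minimised, so $R^*$ is a bona fide local minimum with zero gradient and BPTT halts there. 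This step is exactly where the two-layer hypothesis is essential: in a single-layer RNN the growing counter neuron would be read \emph{directly} by the softmax, and the gradient of its terminator-readout weight, $\sum_{(s,t)}(\pi(\$) - \mathbb 1[s_{t+1}=\$])\,h_{s,t}(\text{counter})$, is structurally nonzero because the terminator always occurs at the latest position of each string (for $C=\{a\}$ it evaluates to $-\tfrac12$). Gradient descent would then move off the configuration. Routing the counter to the predictions only through the dead top-layer gate removes this weight entirely and is what makes the critical point honest.

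Finally I certify inconsistency. For $\Sigma = \{a\}$ the assigned mass telescopes, exactly as in the consistency analysis of Example~\ref{ex:consistent}, to $\sum_{n \ge 0} R(a^n) = 1 - \prod_{t \ge 0}(1 - q_t)$, so $R^*$ is consistent iff $\sum_t q_t = \infty$. Here $q_0 = q_1 = \tfrac12$ while $q_t = O(e^{-K(t-L)})$ for $t > L$ is summable, whence $\prod_t (1 - q_t) > 0$ and $\sum_s R^*(s) < 1$. Thus $R^*$ is an inconsistent local optimum of BPTT on $C$, which proves the claim.
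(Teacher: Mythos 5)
Your proposal is correct and takes essentially the same route as the paper's proof: the paper also uses the corpus $\{a\}$ over $\Sigma=\{a\}$, a first-layer counter feeding a second-layer ReLU that is dead on the corpus (so the predictions are the uniform $\tfrac12$--$\tfrac12$ distribution matching the empirical next-token frequencies and every parameter gradient vanishes), and post-corpus activation of that unit that makes the termination probability decay summably, giving total mass strictly below $1$. Your $\tfrac12$ offset keeping the gate strictly dead at the boundary step, and the explicit locally-constant/convexity argument for local optimality, are somewhat more careful than the paper's version (where the gate's pre-activation sits exactly at the ReLU kink at $t=1$, implicitly relying on the zero-subgradient convention), but this is a refinement rather than a different approach.
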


\begin{proof}
  The first layer of the RNN~$R$ with a single alphabet symbol~$a$
  uses one neuron~$n'$ and has the following behavior:
  \begin{align*}
    h_{-1}(n')
    &= 0 \\
    h_{s, t}(n')
    &= \sigma \langle h_{s, t-1}(n') + 1 \rangle
  \end{align*}
  The second layer uses neuron~$n$ and takes~$h_{s,t}(n')$ as input at
  time~$t$: 
  \begin{align*}
    h_{s,t}(n)
    &= \sigma \langle h_{s,t}(n') - 2 \rangle \\
    E_{s,t}(a)
    &= h_{s,t}(n) \qquad
    & E_{s, t}(\$) &= 0 \\
    E'_{s,t}(a)
    &= \begin{cases}
      \frac{1}{2}
      & \text{if } t \leq 1 \\
      \frac{e^{(t-1)}}{1+e^{(t-1)}}
      & \text{otherwise.}
    \end{cases}
  \end{align*}
  Let the training data be~$\{a\}$.  Then the objective we wish to 
  maximize is simply~$R(a)$.  The derivative of this objective with
  respect to each parameter is~$0$, so applying gradient descent
  updates does not change any of the parameters and we have converged
  to an inconsistent RNN.
\end{proof}

It remains an open question whether there is a single-layer RNN that
also exhibits this behavior.

\section{Highest-weighted string}
Given a function~$f \colon \Sigma^* \to \mathbb R$ we are often
interested in the highest-weighted string.  This corresponds to the
most likely sentence in a language model or the most likely
translation for a decoder RNN in machine translation.

For deterministic probabilistic finite-state automata or context-free
grammars only one path or derivation exists for any given string, so
the identification of the highest-weighted string is the same task as
the identification of the most probable path or derivation.  However,
for nondeterministic devices, the highest-weighted string is often
harder to identify, since the weight of a string is the sum of the
probabilities of all possible paths or derivations for that string.  A
comparison of the difficulty of identifying the most probable
derivation and the highest-weighted string for various models is
summarized in Table~\ref{ComparisonMPS}, in which we marked our
results in bold face.

\begin{table}
  \begin{center}
    \begin{tabular}{|c|c|c|} 
      \hline
      & Best-path
      & Best-string \\ 
      \hline
      General RNN
      & \multicolumn{2}{c|}{\textbf{Undecidable}} \\
      Consistent RNN
      & \multicolumn{2}{c|}{\textbf{NP-c}
        \tablefootnote{Restricted to solutions of polynomial length}}
      \\ \hline
      Det.\@ PFSA/PCFG
      & \multicolumn{2}{c|}{P
        \tablefootnote{Dijkstra shortest path / \cite{knuth_1977}}} \\ \cline{3-3}
      Nondet.\@ PFSA/PCFG
      & & \multicolumn{1}{|c|}{NP-c
          \tablefootnote{\cite{casacuberta_higuera_2000} /
          \cite{simaan_1996}}} \\
      \hline
    \end{tabular}
  \end{center}
  \caption{Comparison of the difficulty of identifying the most
    probable derivation (Best-path) and the highest-weighted string
    (Best-string) for various models.}
  \label{ComparisonMPS}
\end{table}

We present various results concerning the difficulty of identifying
the highest-weighted string in a weighted language computed by an
RNN.  We also summarize some available algorithms.  We start with the
formal presentation of the three studied problems. 
\begin{enumerate}
\item \textbf{Best string:} Given an RNN~$R$ and $c \in (0,1)$, does
  there exist $s \in \Sigma^*$ with $R(s) > c$?
\item \textbf{Consistent best string:} Given a consistent RNN~$R$ and
  $c \in (0,1)$, does there exist $s \in \Sigma^*$ with $R(s) > c$?
\item \textbf{Consistent best string of polynomial length:} Given a
  consistent RNN~$R$, polynomial~$\mathcal{P}$ with $\mathcal{P}(x) \ge x$ for $x \in \mathbb{N}^+$, and $c \in (0,1)$, does there
  exist $s \in \Sigma^*$ with $\lvert s\rvert \leq \mathcal{P}(\lvert R\rvert)$
  and $R(s) > c$?
\end{enumerate}
As usual the corresponding optimization problems are not significantly
simpler than these decision problems.  Unfortunately, the general
problem is also undecidable, which can easily be shown using our
example.

\begin{theorem}
  \label{tm-MPS}
  The best string problem for RNNs is undecidable.
\end{theorem}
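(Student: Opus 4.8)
The plan is to reduce the halting problem for deterministic Turing machines to the best string problem, reusing the RNN of Example~\ref{ex:consistent}. Given a Turing machine~$M$, I would apply Corollary~\ref{cor:sim} to obtain two neurons~$n_1, n_2$ whose difference $h_{s,t}(n_1) - h_{s,t}(n_2)$ signals whether and when $M$ halts on~$\varepsilon$, and then identify these with the designated neurons $n, n'$ of Example~\ref{ex:consistent}. The resulting RNN~$R$ is computable from~$M$, and by the corollary it falls into Case~1 of the example exactly when $M$ does not halt and into Case~2 (with halting time~$T$) exactly when $M$ halts after $T$~steps. The reduction then fixes a suitable rational threshold~$c$ and claims that $M$ halts on~$\varepsilon$ if and only if there is a string~$s$ with $R(s) > c$.

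The two cases are distinguished by the profile of the weights $R(a^n)$ over $n \ge 0$ (recall $\Sigma = \{a\}$). In Case~1 the termination probability is $E'_{s,t}(\$) = \tfrac{1}{1+e^{2(t+1)}}$ and the continuation probability is $E'_{s,t}(a) = \tfrac{1}{1+e^{-2(t+1)}}$, and a short computation of the ratio $R(a^{n})/R(a^{n-1}) = E'_{s,n-1}(a)\cdot E'_{s,n}(\$)/E'_{s,n-1}(\$)$ shows it is strictly below~$1$ for every $n \ge 1$. Hence $R$ is maximized by the empty string with $R(\varepsilon) = \tfrac{1}{1+e^2} \approx 0.119$, so no string attains weight above, say, $c = \tfrac15$. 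In Case~2 the slope of $E_{s,t}(\$)$ for $t > T$ exceeds that of $E_{s,t}(a)$, forcing a termination crossover near $t = 3T+5$, and I would show that the single candidate string $a^{3T+5}$ already beats the threshold: there $E'_{s,3T+5}(\$) = \tfrac12$, while the survival product $\prod_{t=0}^{3T+4} E'_{s,t}(a)$ is bounded below by a positive constant independent of~$T$, yielding $R(a^{3T+5}) > \tfrac15 = c$.

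The main obstacle is precisely this uniform-in-$T$ lower bound on $R(a^{3T+5})$ in Case~2: one must control the product of the many near-$1$ factors $E'_{s,t}(a) = 1 - E'_{s,t}(\$)$ across both regimes, $t \le T$ and $T < t \le 3T+4$. The key estimate is that $\sum_t E'_{s,t}(\$)$ over this range is bounded by an absolute constant --- the $t \le T$ terms form a rapidly decaying tail $\sum_t \tfrac{1}{1+e^{2(t+1)}}$, and the remaining terms, after substituting $u = 3T+5-t$, become a partial sum of $\sum_{u \ge 1}\tfrac{1}{1+e^u}$ --- so the survival product stays bounded away from~$0$ uniformly in~$T$. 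Combined with $E'_{s,3T+5}(\$) = \tfrac12$ this places $R(a^{3T+5})$ comfortably above~$\tfrac15$ (numerically around $0.26$, and one can check the boundary case $T = 0$ directly), while Case~1 never exceeds $0.12$; the resulting gap makes $c = \tfrac15$ a valid separating threshold. Since both the RNN~$R$ and the constant~$c$ are computable from~$M$, this reduces halting to the best string problem and establishes its undecidability.
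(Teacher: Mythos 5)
Correct, and essentially the paper's own proof: you run the identical reduction (halting via Example~\ref{ex:consistent} with the designated neurons of Corollary~\ref{cor:sim}), use the same witness string $a^{3T+5}$ whose final termination probability is $\tfrac12$ (the paper's ``$a^{3T-5}$'' is a typo for exactly this index), and separate the two cases by a constant threshold just as the paper does with its $0.12$ versus $0.25$ gap. The only cosmetic difference is technical: you lower-bound the survival product uniformly in $T$ by bounding $\sum_t E'_{s,t}(\$)$, whereas the paper evaluates the corresponding infinite products in closed form via its Lemma~14 (Pochhammer symbols); both give the same constant $\approx 0.26$, so your threshold $c=\tfrac15$ works exactly as the paper's implicit $c=0.12$ does.
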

\begin{proof}
  Let $M$ be an arbitrary Turing machine and again consider the
  RNN~$R$ of Example~\ref{ex:consistent} with the neurons $n$~and~$n'$
  identified with the designated neurons of Corollary~\ref{cor:sim}.
  We note that $R(\varepsilon) = \tfrac 1{1+e^2} < 0.12$ in both
  cases.  If $M$~does not halt, then $R(a^n) \leq \tfrac
  1{1+e^{2(n+1)}} \leq \tfrac 1{1+e^2} < 0.12$ for all~$n \in \nat$.
  On the other hand, if $M$~halts after $T$~steps, then
  \begin{align*}
    &\phantom{{}={}} R(a^{3T-5}) \\
    &= \Bigl(\prod_{t = 0}^T \frac{e^{2(t+1)}}{1 + e^{2(t+1)}} \Bigr)
      \cdot \Bigl(\prod_{t = T+1}^{3T-6} \frac{1}{1+e^{t-3T-5}} \Bigr)
      \cdot \frac12 \\
    &\geq \frac{2}{(-1, e^{-2})_\infty} \cdot \Bigl(\prod_{t =
      T+1}^{3T-6} \frac{e^{3T+5-t}}{e^{3T+5-t+1}} \Bigr) \cdot \frac12
    \\
    &\geq \frac{2}{(-1, e^{-2})_\infty \cdot (-1, e^{-1})_\infty} \geq 0.25
  \end{align*}
  using Lemma 14
  in the appendix.  Consequently, a
  string with weight above~$0.12$ exists if and only if $M$ halts, so
  the best string problem is also undecidable.
\end{proof}


If we restrict the RNNs to be consistent, then we can easily decide
the best string problem by simple enumeration.

\begin{theorem}
  The consistent best string problem for RNNs is decidable.
\end{theorem}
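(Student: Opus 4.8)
The plan is to exploit consistency to cut the search down to finitely many strings. Since $R$ is consistent we have $\sum_{s \in \Sigma^*} R(s) = 1$, and every individual weight lies in $(0,1)$. I would enumerate the strings of $\Sigma^*$ in an effective order (say shortlex) as $s_1, s_2, \dotsc$, maintaining the running partial sum $\Sigma_k = \sum_{i=1}^k R(s_i)$. The crucial observation is that the accumulated mass bounds the tail: once $\Sigma_k > 1 - c$, every not-yet-enumerated string $s$ satisfies $R(s) \le \sum_{i > k} R(s_i) = 1 - \Sigma_k < c$, because all weights are positive. Hence no unenumerated string can be a witness, and a string of weight exceeding $c$ exists if and only if one of the finitely many strings $s_1, \dotsc, s_k$ already has $R(s_i) > c$.

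Next I would argue that the procedure is effective. Each hidden-state vector $h_{s,t}$ is obtained from rational data by matrix products, vector additions, and the ReLU $\sigma$, all of which preserve rationality, so each $E_{s,t}$ is a rational vector and each factor $E'_{s,t}(s_{t+1})$ of $R(s)$ is a softmax value of the form $e^{q}/\sum_b e^{q_b}$ with rational exponents. Thus $R(s)$, and every finite partial sum $\Sigma_k$, is a computable real that I can approximate to arbitrary precision. Because $\Sigma_k$ increases strictly to $1 > 1 - c$, some $k$ satisfies $\Sigma_k > 1 - c$ strictly, and this strict inequality is detectable once $\Sigma_k$ is computed to sufficient precision; so the enumeration is guaranteed to reach a certifiable stopping point. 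Fixing such a $k$, I would then compare each of the finitely many weights $R(s_1), \dotsc, R(s_k)$ against $c$ and answer ``yes'' exactly when some comparison yields $R(s_i) > c$.

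The step I expect to be the main obstacle is this final exact comparison of a weight $R(s_i)$ with the threshold $c$. Whenever $R(s_i) > c$ or $R(s_i) < c$ holds strictly, a sufficiently precise rational approximation settles it, so both strict outcomes are semidecidable; the delicate case is the boundary $R(s_i) = c$, where finite-precision evaluation alone never certifies $\le$. This is the genuine technical point to dispatch: since $R(s_i)$ is a product of softmax terms in exponentials of rationals while $c$ is rational, one argues (appealing to the transcendence of such expressions, or to the multiplicative structure of the factors) that the borderline equality can be detected or excluded, so that the trichotomy $<,=,>$ against $c$ is decidable for these particular numbers. Once the comparison is resolved, the finiteness of the search established above immediately yields a terminating decision procedure.
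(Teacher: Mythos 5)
Your proposal is correct and takes essentially the same route as the paper's proof: enumerate $\Sigma^*$, accumulate partial sums of weights, and use consistency together with positivity of weights to certify, once the partial sum exceeds $1-c$, that no unenumerated string can be a witness. The only difference is that you explicitly flag the effectivity of comparing a weight $R(s)$ (a product of softmax values, hence generally transcendental) against the rational threshold $c$ --- a point the paper's proof silently assumes away --- and your sketch of resolving the boundary case $R(s)=c$ via the symbolic structure of the factors (e.g.\ Lindemann--Weierstrass) is the right way to close that remaining gap.
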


\begin{proof}
  Let $R$ be the RNN over alphabet~$\Sigma$ and $c \in (0, 1)$ be the
  bound.  Since $\Sigma^*$~is countable, we can enumerate it via~$f
  \colon \nat \to \Sigma^*$.  In the algorithm we compute $S_n =
  \sum_{i = 0}^n R(f(i))$ for increasing values of~$n$.  If we
  encounter a weight~$R(f(n)) > c$, then we stop with answer ``yes.''
  Otherwise we continue until $S_n > 1 - c$, at which point we stop
  with answer ``no.''

  Since $R$~is consistent, $\lim_{i\to\infty} S_i = 1$, so this
  algorithm is guaranteed to terminate and it obviously decides the
  problem.
\end{proof}

Next, we investigate the length~$\lvert w_R^{\text{max}} \rvert$ of the
shortest string~$w_R^{\text{max}}$ of maximal weight in the weighted
language~$R$ generated by a consistent RNN~$R$ in terms of its (binary
storage) size~$\lvert R\rvert$.  As already mentioned
by~\newcite{siegelmann} and evidenced here, only small precision
rational numbers are needed in our constructions, so we assume that 
$\lvert R \rvert \leq c \cdot \lvert N \rvert^2$ for a (reasonably
small) constant~$c$, where $N$~is the set of neurons of~$R$.  We show
that no computable bound on the length of the best string can exist,
so its length can surpass all reasonable bounds.

\begin{theorem}
  Let $f \colon \natp \to \nat$ be the function with 
  \[  f(n) = \max_{\substack{\text{consistent RNN~$R$} \\
        \lvert R \rvert \leq n}} \lvert w_R^{\text{max}} \rvert \]
  for all $n \in \natp$.  There exists no computable function $g
  \colon \nat \to \nat$ with $g(n) \geq f(n)$ for all $n \in
  \nat$. 
\end{theorem}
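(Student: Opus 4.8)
The plan is to reduce the halting problem to the computation of any upper bound on~$f$, exploiting the fact that the description size of an RNN simulating a Turing machine is independent of that machine's running time, whereas the length of the best string grows with the running time. Suppose toward a contradiction that a computable $g \colon \nat \to \nat$ with $g(n) \geq f(n)$ for all~$n$ exists.

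Given an arbitrary deterministic Turing machine~$M$, I would build the RNN~$R_M$ exactly as in the undecidability argument for consistency: take the RNN of Example~\ref{ex:consistent} and identify its neurons $n$~and~$n'$ with the two designated neurons of Corollary~\ref{cor:sim} for~$M$. By the analysis of that example, $R_M$ falls into Case~1 (inconsistent) if $M$ never halts on~$\varepsilon$, and into Case~2 (consistent) if $M$ halts after $T$~steps. Crucially, the size~$\lvert R_M\rvert$ depends only on the description of~$M$ and not on its running time, so $n := \lvert R_M\rvert$ is computable from~$M$ alone.

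The quantitative heart of the argument is the behavior of~$R_M$ in Case~2. There the termination probability $E'_{s,t}(\$) = \frac{1}{1+e^{2(t+1)}}$ is negligible for every $t \leq T$ and only begins to grow once $t > T$. Hence every string~$a^k$ with $k \leq T$ must terminate by predicting~$\$$ at a step where that prediction has vanishing probability, so $R_M(a^k)$ is tiny; by contrast, the computation in the proof of Theorem~\ref{tm-MPS} exhibits a string of length $\Theta(T)$ with weight at least~$0.25$. Consequently the shortest maximal-weight string~$w_{R_M}^{\text{max}}$ satisfies $\lvert w_{R_M}^{\text{max}}\rvert > T$.

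Combining these facts yields the contradiction. If $M$~halts after $T$~steps, then $R_M$ is a consistent RNN of size~$n$, so by the definition of~$f$ and the assumed bound~$g$ we get $T < \lvert w_{R_M}^{\text{max}}\rvert \leq f(n) \leq g(n)$. Thus $M$~halts on~$\varepsilon$ if and only if it halts within $g(n)$~steps, a condition we can check by direct simulation. This decides the halting problem, contradicting its undecidability, so no such computable~$g$ exists. The main obstacle I expect is the quantitative lower bound $\lvert w_{R_M}^{\text{max}}\rvert > T$: one must confirm that, when $T$~is large, no string of length at most~$T$ can beat the weight of the longer near-optimal string, which amounts to verifying that the accumulated tiny termination probabilities for $t \leq T$ really do suppress the weight of every short string below that of the length-$\Theta(T)$ maximizer.
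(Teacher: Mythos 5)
Your proposal is correct, and it rests on exactly the same two pillars as the paper's proof: the RNN~$R_M$ obtained by identifying the designated neurons of Corollary~\ref{cor:sim} with $n$~and~$n'$ in Example~\ref{ex:consistent} (so that $R_M$ is consistent if and only if $M$~halts, with $\lvert R_M\rvert$ independent of the running time), and the quantitative fact from the proof of Theorem~\ref{tm-MPS} that the best string exceeds the halting time~$T$ in length. The ``main obstacle'' you flag at the end is in fact already settled by that proof: every string~$a^k$ with $k \leq T$ has weight at most $\tfrac{1}{1+e^{2(k+1)}} \leq \tfrac{1}{1+e^2} < 0.12$, while $R_M(a^{3T-5}) \geq 0.25$, so the maximizer is longer than~$T$. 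The only genuine difference is the outer shell. The paper routes through the Busy Beaver function: it shows $BB(n) \leq f(4nc + 16c)$ and then cites the known fact that $BB$ admits no computable upper bound. You instead assume a computable bound $g \geq f$ and convert it directly into a decision procedure for the halting problem, by simulating~$M$ for $g(\lvert R_M\rvert)$ steps. These are morally the same argument, since the non-boundability of~$BB$ is itself proved by precisely your reduction; your version is self-contained where the paper delegates to a known result, while the paper's phrasing yields the slightly more informative conclusion that $f$~grows at least as fast as the Busy Beaver numbers, not merely that it has no computable bound.
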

\begin{proof}
  In the previous section (before Theorem~\ref{thm:consistent}) we
  presented an RNN~$R_M$ that simulates an arbitrary
  (single-track) Turing machine~$M$ with $n$~states.
  By~\newcite{siegelmann} we have $\lvert R_M \rvert  \leq c \cdot (4n
  + 16)$.  Moreover, we observed that this RNN~$R_M$ is
  consistent if and only if the Turing machine~$M$ halts on empty
  input.  In the proof of Theorem~\ref{tm-MPS} we have additionally
  seen that the length~$\lvert w_R^{\text{max}} \rvert$ of its best
  string exceeds the number~$T_M$ of steps required to halt.

  For every $n \in \nat$, let $BB(n)$ be the $n$-th ``Busy Beaver''
  number~\cite{rad62}, which is
  \[ BB(n) = \max_{\substack{\text{normalized $n$-state Turing
          machine~$M$ with} \\ \text{2 tape symbols that halts on
          empty input}}} T_M \] 
  It is well-known that $BB \colon \natp \to \nat$ cannot be bounded
  by any computable function.  However,
  \begin{align*}
    BB(n)
    &\leq \max_{\substack{\text{normalized $n$-state Turing
      machine~$M$ with} \\ \text{and 2 tape symbols that halts on
    empty input}}} \lvert w_{R_M}^{\text{max}} \rvert \\
    &\leq \max_{\substack{\text{consistent RNN~$R$} \\ \lvert R \rvert
    \leq c \cdot (4n + 16)}} \lvert w_R^{\text{max}} \rvert \\
    &= f(4nc + 16c) \enspace,
  \end{align*}
  so $f$~clearly cannot be computable and no computable function~$g$
  can provide bounds for~$f$. 
\end{proof}

Finally, we investigate the difficulty of the best string problem for consistent RNN restricted to solutions of polynomial length.

\begin{theorem}
Identifying the best string of polynomial length in a consistent RNN is NP-complete and APX-hard.
\end{theorem}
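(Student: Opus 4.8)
The plan is to establish the two claims with a single reduction from MAX-3SAT, handling membership in NP separately.

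\textbf{Membership in NP.} I would guess a string $s$ with $\lvert s\rvert \le \mathcal P(\lvert R\rvert)$ and verify $R(s) > c$ in polynomial time. First I would argue that the hidden states $h_{s,t}$ can be computed exactly over $\mathbb Q$ with polynomially many bits: each step applies a fixed rational linear map followed by the ReLU~$\sigma$, which never increases numerator/denominator bit-length, and any counting neuron grows only linearly in~$t$ and hence uses $O(\log\lvert s\rvert)$ bits. Taking logarithms reduces the test $R(s) > c$ to deciding the sign of $\sum_{t=0}^{n}\bigl(E_{s,t}(s_{t+1}) - \log\sum_{a'} e^{E_{s,t}(a')}\bigr) - \log c$, a sum of polynomially many terms built from exponentials of polynomially bounded rationals. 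The delicate point is bounding the precision needed to resolve this comparison; I would bound it using the polynomial bit-complexity of the logits $E_{s,t}(a')$ and approximate each term to that many bits, which is polynomial-time. Combined with the hardness below, this gives NP-completeness.

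\textbf{The reduction.} Given variables $x_1,\dots,x_m$ and clauses $C_1,\dots,C_k$, I would build a consistent RNN~$R$ over the alphabet $\{0,1,\checkmark,\times\}$ whose best string encodes a truth assignment maximizing the number of satisfied clauses. The intended string has two phases. In the \emph{assignment phase} $s_1\cdots s_m\in\{0,1\}^m$ the RNN latches a clause-indicator neuron $d_j\in\{0,1\}$ to~$1$ as soon as some literal of $C_j$ is satisfied, using the counting and base-$(m{+}1)$ ReLU gadgets of Siegelmann already employed above, while each assignment symbol is predicted essentially uniformly, so the phase contributes an assignment-independent factor. In the \emph{readout phase}, at the step for clause~$j$ the next-token distribution is a fixed function of $d_j$: a near point mass on~$\checkmark$ (best factor $\approx 1$) if $d_j=1$, and a distribution capping the largest symbol probability at $\tfrac{k}{k+1}$ if $d_j=0$. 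Since readout symbols do not affect the relevant neurons, the prover's optimum is greedy, giving $R(s)=C\cdot(1+\tfrac1k)^{-(k-\mathrm{sat}(s))}$ for a fixed constant~$C$, where $\mathrm{sat}(s)$ is the number of satisfied clauses; thus $\log R$ is affine in $\mathrm{sat}(s)$ and the best string recovers the MAX-3SAT optimum. The intended string has length $m+k+1$, which is polynomial and within the $\mathcal P(\lvert R\rvert)$ budget.

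\textbf{Consistency and the hardness conclusions.} I expect the main obstacle to be proving that~$R$ is consistent, since consistency is a global constraint over all of $\Sigma^*$ while the commitment gadget deliberately uses near-deterministic predictions. The resolution is to read~$R$ generatively, so that $R(s)$ is the probability that sampling from the successive next-token distributions emits exactly~$s$ followed by~$\$$, and to append a \emph{sink phase} entered after step $m+k$ in which~$\$$ is predicted with probability $\ge\tfrac12$; then every infinite path is killed with probability~$1$, the process halts almost surely, and $\sum_{s}R(s)=1$. I would also check the prover cannot beat the intended string by leaving a phase early or reading off-phase symbols, by giving off-phase symbols low probability and noting early termination only forfeits mass. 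For NP-hardness, I choose the threshold~$c$ strictly between the all-satisfied weight~$C$ and the next-lower attainable weight, so that a string of weight $>c$ exists iff the formula is satisfiable. For APX-hardness, I would invoke the PCP theorem: it is NP-hard to distinguish $\mathrm{sat}=k$ from $\mathrm{sat}\le(1-\varepsilon)k$, and the tuned factor $\tfrac{k}{k+1}$ converts this into a constant-factor gap $(1+\tfrac1k)^{\varepsilon k}\to e^{\varepsilon}$ in the optimal weight, so a PTAS for the best string would yield one for MAX-3SAT.
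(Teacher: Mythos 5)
Your proposal is correct in substance and shares the paper's overall skeleton (NP membership by guess-and-check; hardness by building a consistent RNN that reads a $\{0,1\}$ assignment in an initial phase, computes constraint indicators by ReLU thresholding of integer-valued linear forms, keeps termination probability negligible off the intended length, and secures consistency via a constant tail with positive stopping probability), but your reduction is genuinely different. The paper reduces from 0-1 Integer Linear Programming Feasibility and encodes the number $d$ of satisfied constraints into a \emph{single} readout step $t=n+1$, where the termination probability is $(d+1)/(k+2)$; the weight of a length-$n$ string is thus affine in $d$, so approximation ratios transfer immediately (a PTAS-reduction to the APX-complete ILP maximization problem), and the length analysis only has to single out length $n$. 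You instead reduce from MAX-3SAT and spread the readout over $k$ steps, one per clause, each contributing a factor of roughly $1$ or $k/(k+1)$, so the \emph{log}-weight is affine in $\mathrm{sat}(s)$, and you obtain APX-hardness from the PCP gap. Both encodings work; yours costs a bit more bookkeeping: (i) you need a timing/conveyor gadget delivering the clause indicator $d_j$ to the logits exactly at step $m+j$ (the paper needs timing too, but only to latch $x_j$ at step $j$ and to trigger one readout step); (ii) the ``$\approx 1$'' factors for satisfied clauses must be controlled quantitatively so that $R(s)=C\,(1+1/k)^{-(k-\mathrm{sat}(s))}$ holds up to error smaller than the $1$ vs.\ $k/(k+1)$ gap; and (iii) converting a $(1-\delta)$-approximation of weight into an approximation of $\mathrm{sat}$ passes through a logarithm and uses $\mathrm{sat}_{\max}\ge k/2$, losing a constant factor --- all routine, but avoided by the paper's linear encoding.

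Two soft spots you should note, both shared with (not worse than) the paper: with rational weights, softmax cannot realize target probabilities such as $k/(k+1)$ (or the paper's $(d+1)\delta_2$) exactly, so both constructions really produce polynomial-bit approximations and the thresholds must sit inside the resulting error margins; and NP membership is not quite ``trivial,'' since verifying $R(s)>c$ compares a product of softmax values against a rational threshold and needs an effective lower bound on a nonzero difference --- you at least flag this, the paper does not, but neither argument resolves it. Finally, your claim that a rational linear map followed by ReLU ``never increases'' bit-length is false (denominators compound across steps); what is true, and suffices for your argument, is that the bit-length grows only polynomially in the number of steps.
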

{\em Proof sketch.} Clearly, we can guess an input string of polynomial length, run the RNN, and verify whether its weight exceeds the given bound in polynomial time.  Therefore the problem is trivially in NP.  For NP-hardness, we reduce from the 0-1 Integer Linear Programming Feasibility Problem:

\paragraph{0-1 Integer Linear Programming Feasibility:} Given: $n$ variables $x_1, x_2, ..., x_n$ which can only take values in $\{0, 1\}$, and $k$ constraints ($k \ge 1$): $\sum_{j=1}^{n}{A_{ij}x_j} - B_i \le 0$, $\forall i \in [k]$. $A \in \mathbb{Z}^{k \times n}$, $B \in \mathbb{Z}^k$. Return: Yes iff. there is a feasible solution $x=(x_1, x_2, ..., x_n) \in \{0, 1\}^n$ that satisfies all $k$ constraints.
\medskip

Suppose we are given an instance of the above problem. We construct an instance of the consistent best string of polynomial length problem with input $\langle R, c\rangle$. Our construction ensures that the only length at which a string can have weight greater than $c$ is $n$. Thus, if there is any string whose weight is greater than $c$, the given instance of 0-1 Integer Linear Programming Problem is feasible; otherwise it is not.

Our reduction is a Polynomial-Time Approximation Scheme (PTAS) reduction and preserves approximability. Since 0-1 Integer Linear Programming Feasibility is NP-complete and the corresponding maximization problem is APX-complete, consistent best string of polynomial length is NP-complete and APX-hard, meaning there is no PTAS to find the best string bounded by polynomial length (i.e. the best we can hope for in polynomial time is a constant-factor approximation algorithm) unless P $=$ NP.

The full proof is given in the appendix.

If we assume that the solution length is bounded by some finite number, we can convert algorithms from \newcite{Higuera2013ComputingTM} for computing the most probable string in PFSAs for use in RNNs. Such algorithms would be similar to beam search \cite{Lowerre:1976:HSR:907741} used most widely in practice.



\section{Equivalence}

We prove that equivalence of two RNNs is undecidable. For comparison, equivalence of two deterministic WFSAs can be tested in time $O(|\Sigma|(|Q_A|+|Q_B|)^3)$, where $|Q_A|$, $|Q_B|$ are the number of states of the two WFSAs and $|\Sigma|$ is the size of the alphabet \cite{cortes_mohri_rastogi_2007}; equivalence of nondeterministic WFSAs are undecidable \cite{griffiths_1968}. The decidability of language equivalence for deterministic probabilistic push-downtown automata (PPDA) is still open \cite{forejt_jančar_kiefer_worrell_2014}, although equivalence for deterministic unweighted push-downtown automata (PDA) is decidable \cite{Sénizergues1997}.

The equivalence problem is formulated as follows:
\paragraph{Equivalence:} Given two RNNs $R$ and $R'$, return ``yes'' if $R(s) = R'(s)$ for all $s \in \Sigma^*$, and ``no'' otherwise.

\medskip

\begin{theorem}
\label{tm-EQ}
The equivalence problem for RNNs is undecidable.
\end{theorem}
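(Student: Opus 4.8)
The plan is to reduce the complement of the halting problem to the equivalence problem, reusing the construction from Example~\ref{ex:consistent} together with Corollary~\ref{cor:sim}. The key observation is that once the neurons $n$ and $n'$ of Example~\ref{ex:consistent} are identified with the two designated neurons of Corollary~\ref{cor:sim} for a given Turing machine $M$, the resulting RNN---call it $R_M$---computes one of exactly two fixed weighted languages: the Case~1 language when $M$ never halts on empty input, and the Case~2 language when $M$ halts. So the behavior of $R_M$ on $\Sigma^* = \{a\}^*$ is entirely governed by the halting status of $M$.

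First I would fix a single reference RNN $R_0$ that computes exactly the Case~1 weighted language and does not depend on $M$. Concretely, $R_0$ is obtained from Example~\ref{ex:consistent} by dropping the neurons $n$ and $n'$ and hardwiring $h_{s,t}(1) = \sigma\langle h_{s,t-1}(1)\rangle$ so that $h_{s,t}(1) = 0$ throughout; this forces $h_{s,t}(2) = t+1$, $h_{s,t}(3) = 0$, and hence $E_{s,t}(\$) = -(t+1)$ and $E_{s,t}(a) = t+1$ at every step, which is precisely the Case~1 behavior.

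Next I would establish the reduction claim: $R_M \equiv R_0$ if and only if $M$ does not halt on empty input. If $M$ does not halt, then $R_M$ follows Case~1 at every time step, so its next-token predictions coincide with those of $R_0$ and therefore $R_M(a^n) = R_0(a^n)$ for all $n$; since $\Sigma = \{a\}$ exhausts all strings, $R_M \equiv R_0$. If $M$ halts after $T$ steps, then $R_M$ follows Case~2: because $h_{s,t}(3) = \sigma\langle 3(t-T-1)\rangle = 0$ for $t \le T+1$, the prediction factors still agree with $R_0$ up to that point, but for $t \ge T+2$ the term $E_{s,t}(\$) = 2t - 3T - 4$ replaces $-(t+1)$, which changes the softmax distribution and hence the factor $E'_{s,t}(a)$. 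Consequently $R_M(a^n) \ne R_0(a^n)$ for a suitable $n > T$ (e.g.\ $n \ge T+3$), so $R_M \not\equiv R_0$.

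Finally I would conclude that a decision procedure for equivalence, applied to the pair $\langle R_M, R_0\rangle$, would decide whether $M$ halts on empty input, contradicting the undecidability of the halting problem. The hard part will be the middle step---verifying that the two languages agree on \emph{all} of $\Sigma^*$ in the non-halting case while exhibiting a genuine discrepancy in the halting case---but this requires no new construction: it follows directly from the explicit formulas for $E_{s,t}(\$)$ and $E_{s,t}(a)$ in the two cases of Example~\ref{ex:consistent}, since the product $R(a^n) = \prod_{t=0}^{n} E'_{s,t}(s_{t+1})$ differs exactly when one of its individual factors does.
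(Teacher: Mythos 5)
Your proposal is correct and follows essentially the same route as the paper: both reduce from the halting problem by attaching an output layer to the simulation RNN of Corollary~\ref{cor:sim} so that the computed weighted language depends only on whether the machine halts, and then compare against a fixed, machine-independent reference RNN (the paper uses the simpler output layer $E_{s,t}(a)=0$, $E_{s,t}(\$)=h_{s,t}(n_1)-h_{s,t}(n_2)$ and takes the $R_1(a^n)=2^{-(n+1)}$ RNN of Figure~\ref{examples} as reference, while you reuse the Example~\ref{ex:consistent} gadget with a purpose-built Case-1 reference). The only loose point is your closing claim that the product differs whenever an individual factor does---false in general, since differing factors can cancel---but it is harmless here: take $n=T+2$, so that all factors for $t\le T+1$ coincide and only the final factor $E'_{s,T+2}(\$)$ differs ($\tfrac{1}{1+e^{2T+3}}$ versus $\tfrac{1}{1+e^{2T+6}}$), whence the two weights differ.
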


\begin{proof}
We prove by contradiction. Suppose Turing machine $M$ decides the equivalence problem. Given any deterministic Turing Machine $M'$, construct the RNN $R$ that simulates $M'$ on input $\epsilon$ as described in Corollary~\ref{cor:sim}. Let $E_{s,t}(a) = 0$ and $E_{s,t}(\$)=h_{s,t}(n_1)-h_{s,t}(n_2)$. If $M'$ does not halt on $\epsilon$, for all $t \in \mathbb{N}$, $E'_{s,t}(a) = E'_{s,t}(\$)=1/2$; if $M'$ halts after $T$ steps, $E'_{s,T}(a) = 1/(e+1)$, $E_{s,T}(\$)=e/(e+1)$. Let $R'$ be the trivial RNN that computes $\{a^n: P(a^n) = 2^{-(n+1)}, n \ge 0\}$. We run $M$ on input $\langle R, R'\rangle$. If $M$ returns ``no", $M'$ halts on $x$, else it does not halt. Therefore the Halting Problem would be decidable if equivalence is decidable. Therefore equivalence is undecidable.
\end{proof}

\section{Minimization}

We look next at minimization of RNNs.  For comparison, state-minimization of a deterministic PFSA is $O(|E|\log{|Q|})$ where $|E|$ is the number of transitions and $|Q|$ is the number of states \cite{aho_hopcroft_ullman_1974}. Minimization of a non-deterministic PFSA is PSPACE-complete \cite{jiang_ravikumar_1993}. 

We focus on minimizing the number of hidden neurons ($|N|$) in RNNs:

\paragraph{Minimization:} Given RNN $R$ and non-negative integer $n$, return ``yes'' if $\exists$ RNN $R'$ with number of hidden units $|N'| \le n$ such that $R(s) = R'(s)$ for all $s \in \Sigma^*$, and ``no'' otherwise.

\medskip

\begin{theorem}
\label{tm-MIN}
Minimization of RNNs is undecidable.
\end{theorem}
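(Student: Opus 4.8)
The plan is to reduce the (complement of the) halting problem to minimization, reusing the RNN built in the proof of Theorem~\ref{tm-EQ} and setting the target neuron count to $n = 0$. Given a deterministic Turing machine $M'$, I construct exactly that RNN $R$ over $\Sigma = \{a\}$: via Corollary~\ref{cor:sim} it simulates $M'$ on $\varepsilon$, predicting $a$ and $\$$ each with probability $\tfrac12$ at every step when $M'$ does not halt, and perturbing the prediction at a single step when $M'$ halts after $T$ steps. As already computed in that proof, $R(a^n) = 2^{-(n+1)}$ for all $n$ in the non-halting case, whereas in the halting case the weights of all $a^n$ with $n \ge T$ are altered. I then output the minimization instance $\langle R, 0\rangle$.

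The key lemma is a characterization of the weighted languages computable with zero neurons. If $N = \emptyset$, then $h_{s,t}$ is the empty vector, so $E_{s,t} = E \cdot h_{s,t} + E' = E'$ is constant in $t$, and the next-token distribution $d = \mathrm{softmax}\langle E'\rangle$ is one fixed distribution on $\{a, \$\}$ applied at every step. Writing $p = d(a) \in (0,1)$, a $0$-neuron RNN therefore computes exactly $R(a^n) = p^{\,n}(1-p)$, i.e.\ a geometric distribution. Conversely $p = \tfrac12$ is realized by the rational choice $E'(a) = E'(\$) = 0$, so the non-halting language $2^{-(n+1)}$ is itself computed by a $0$-neuron RNN.

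It remains to observe that the halting-case language is \emph{not} geometric. For any geometric language the ratio $R(a^{n+1})/R(a^n) = p$ is constant in $n$; but when $M'$ halts, the single-step perturbation forces this ratio to jump away from $\tfrac12$ around $n = T$, so no single value of $p$ fits all $n$. Hence $R$ is equivalent to a $0$-neuron RNN if and only if $M'$ does not halt, and a decision procedure for minimization would decide the complement of the halting problem, a contradiction. The one delicate point is the lower bound, namely ruling out any smaller equivalent RNN; choosing the target $n = 0$ is precisely what makes this tractable, since $0$-neuron RNNs admit the clean geometric characterization above. For any larger target value the corresponding lower bound on the number of neurons would be the genuine obstacle, which is why I route the reduction through $n = 0$.
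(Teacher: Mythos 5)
Your proposal is correct and follows essentially the same route as the paper: reduce from the halting problem by feeding the Theorem~\ref{tm-EQ} RNN together with target neuron count $0$ to a hypothetical minimization decider, using the fact that zero-neuron RNNs can only produce a constant next-token distribution. Your explicit geometric characterization of zero-neuron languages and the ratio argument showing the halting-case language is not geometric simply flesh out, with welcome rigor, what the paper's proof asserts tersely in one line.
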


\begin{proof}
We reduce from the Halting Problem. Suppose Turing Machine $M$ decides the minimization problem. For any Turing Machine $M'$, construct the same RNN $R$ as in Theorem~\ref{tm-EQ}. We run $M$ on input $\langle R,0 \rangle$. Note that an RNN with no hidden unit can only output constant $E'_{s,t}$ for all $t$. Therefore the number of hidden units in $R$ can be minimized to $0$ if and only if it always outputs $E'_{s,t}(a) = E'_{s,t}(\$)=1/2$.  If $M$ returns ``yes", $M'$ does not halt on $\epsilon$, else it halts. Therefore minimization is undecidable.
\end{proof}
\section{Conclusion}

We proved the following hardness results regarding RNN as a recognizer of weighted languages:

\begin{enumerate}
\item Consistency:
\begin{enumerate}
\item Inconsistent RNNs exist.
\item Consistency of RNNs is undecidable.
\end{enumerate}
\item Highest-weighted string:
\begin{enumerate}
\item Finding the highest-weighted string for an arbitrary RNN is undecidable.
\item Finding the highest-weighted string for a consistent RNN is decidable, but the solution length can surpass all computable bounds.
\item Restricting to solutions of polynomial length, finding the highest-weighted string is NP-complete and APX-hard.
\end{enumerate}
\item Testing equivalence of RNNs and minimizing the number of neurons in an RNN are both undecidable.
\end{enumerate}

Although our undecidability results are upshots of the Turing-completeness of RNN \cite{siegelmann}, our NP-completeness and APX-hardness results are original, and surprising, since the analogous hardness results in PFSA relies on the fact that there are multiple derivations for a single string \cite{casacuberta_higuera_2000}. The fact that these results hold for the relatively simple RNNs we used in this paper suggests that the case would be the same for more complicated models used in NLP, such as long short term memory networks (LSTMs; \citealt{Hochreiter:1997:LSM:1246443.1246450}).

Our results show the non-existence of (efficient) algorithms for interesting problems that researchers using RNN in natural language processing tasks may have hoped to find. On the other hand, the non-existence of such efficient or exact algorithms gives evidence for the necessity of approximation, greedy or heuristic algorithms to solve those problems in practice. In particular, since finding the highest-weighted string in RNN is the same as finding the most-likely translation in a sequence-to-sequence RNN decoder, our NP-completeness and APX-hardness results provide some justification for employing greedy and beam search algorithms in practice.

\section*{Acknowledgments}
This work was supported by DARPA (W911NF-15-1-0543 and HR0011-15-C-0115). Andreas Maletti was financially supported by DFG Graduiertenkolleg 1763 (QuantLA).

\bibliographystyle{acl_natbib}
\bibliography{rnn}

\begin{thebibliography}{}
\expandafter\ifx\csname natexlab\endcsname\relax\def\natexlab#1{#1}\fi

\bibitem[{Aho et~al.(1974)Aho, Hopcroft, and Ullman}]{aho_hopcroft_ullman_1974}
Alfred~V. Aho, John~E. Hopcroft, and Jeffrey~D. Ullman. 1974.
\newblock {\em The design and analysis of computer algorithms\/}.
\newblock Addison-Wesley.

\bibitem[{Allauzen et~al.(2007)Allauzen, Riley, Schalkwyk, Skut, and
  Mohri}]{Allauzen2007}
Cyril Allauzen, Michael Riley, Johan Schalkwyk, Wojciech Skut, and Mehryar
  Mohri. 2007.
\newblock {\em OpenFst: A General and Efficient Weighted Finite-State
  Transducer Library\/}, Springer Berlin Heidelberg, Berlin, Heidelberg, pages
  11--23.

\bibitem[{Bahdanau et~al.(2014)Bahdanau, Cho, and Bengio}]{bahdanau2014neural}
D.~Bahdanau, K.~Cho, and Y.~Bengio. 2014.
\newblock Neural machine translation by jointly learning to align and
  translate.
\newblock In {\em Proc. ICLR\/}.

\bibitem[{Booth and Thompson(1973)}]{booth_thompson_1973}
T.~L. Booth and R.~A. Thompson. 1973.
\newblock \href{https://doi.org/10.1109/t-c.1973.223746}{Applying probability
  measures to abstract languages}.
\newblock {\em IEEE Transactions on Computers\/} C-22(5):442–450.
\newblock \url{https://doi.org/10.1109/t-c.1973.223746}.

\bibitem[{Casacuberta and de~la Higuera(2000)}]{casacuberta_higuera_2000}
Francisco Casacuberta and Colin de~la Higuera. 2000.
\newblock \href{https://doi.org/10.1007/978-3-540-45257-7_2}{Computational
  complexity of problems on probabilistic grammars and transducers}.
\newblock {\em Grammatical Inference: Algorithms and Applications Lecture Notes
  in Computer Science\/} page 15–24.
\newblock \url{https://doi.org/10.1007/978-3-540-45257-7_2}.

\bibitem[{Cortes et~al.(2007)Cortes, Mohri, and
  Rastogi}]{cortes_mohri_rastogi_2007}
Corinna Cortes, Mehryar Mohri, and Ashish Rastogi. 2007.
\newblock \href{https://doi.org/10.1142/s0129054107004966}{L$_p$ distance and
  equivalence of probabilistic automata}.
\newblock {\em International Journal of Foundations of Computer Science\/}
  18(04):761–779.
\newblock \url{https://doi.org/10.1142/s0129054107004966}.

\bibitem[{de~la Higuera and Oncina(2013)}]{Higuera2013ComputingTM}
Colin de~la Higuera and Jos{\'e} Oncina. 2013.
\newblock Computing the most probable string with a probabilistic finite state
  machine.
\newblock In {\em FSMNLP\/}.

\bibitem[{Droste et~al.(2013)Droste, Kuich, and
  Vogler}]{droste_kuich_vogler_2013}
Manfred Droste, Werner Kuich, and Heiko Vogler. 2013.
\newblock {\em Handbook of Weighted Automata\/}.
\newblock Springer Berlin.

\bibitem[{Forejt et~al.(2014)Forejt, Jančar, Kiefer, and
  Worrell}]{forejt_jančar_kiefer_worrell_2014}
Vojtěch Forejt, Petr Jančar, Stefan Kiefer, and James Worrell. 2014.
\newblock \href{https://doi.org/10.1016/j.ic.2014.04.003}{Language equivalence
  of probabilistic pushdown automata}.
\newblock {\em Information and Computation\/} 237:1–11.
\newblock \url{https://doi.org/10.1016/j.ic.2014.04.003}.

\bibitem[{Griffiths(1968)}]{griffiths_1968}
T.~V. Griffiths. 1968.
\newblock \href{https://doi.org/10.1145/321466.321473}{The unsolvability of the
  equivalence problem for $\wedge$-free nondeterministic generalized machines}.
\newblock {\em Journal of the ACM\/} 15(3):409–413.
\newblock \url{https://doi.org/10.1145/321466.321473}.

\bibitem[{Hochreiter and
  Schmidhuber(1997)}]{Hochreiter:1997:LSM:1246443.1246450}
Sepp Hochreiter and J\"{u}rgen Schmidhuber. 1997.
\newblock \href{https://doi.org/10.1162/neco.1997.9.8.1735}{Long short-term
  memory}.
\newblock {\em Neural Comput.\/} 9(8):1735--1780.
\newblock \url{https://doi.org/10.1162/neco.1997.9.8.1735}.

\bibitem[{Jiang and Ravikumar(1993)}]{jiang_ravikumar_1993}
Tao Jiang and B.~Ravikumar. 1993.
\newblock \href{https://doi.org/10.1137/0222067}{Minimal {NFA} problems are
  hard}.
\newblock {\em SIAM Journal on Computing\/} 22(6):1117–1141.
\newblock \url{https://doi.org/10.1137/0222067}.

\bibitem[{Jozefowicz et~al.(2016)Jozefowicz, Vinyals, Schuster, Shazeer, and
  Wu}]{limits}
Rafal Jozefowicz, Oriol Vinyals, Mike Schuster, Noam Shazeer, and Yonghui Wu.
  2016.
\newblock \href{https://arxiv.org/pdf/1602.02410.pdf}{Exploring the limits of
  language modeling}.
\newblock \url{https://arxiv.org/pdf/1602.02410.pdf}.

\bibitem[{Kim and Rush(2016)}]{rush16}
Yoon Kim and Alexander~M. Rush. 2016.
\newblock Sequence-level knowledge distillation.
\newblock In {\em EMNLP\/}.

\bibitem[{Knuth(1977)}]{knuth_1977}
Donald~E. Knuth. 1977.
\newblock \href{https://doi.org/10.1016/0020-0190(77)90002-3}{A generalization
  of {Dijkstra's} algorithm}.
\newblock {\em Information Processing Letters\/} 6(1):1–5.
\newblock \url{https://doi.org/10.1016/0020-0190(77)90002-3}.

\bibitem[{Lowerre(1976)}]{Lowerre:1976:HSR:907741}
Bruce~T. Lowerre. 1976.
\newblock {\em The Harpy Speech Recognition System.\/}.
\newblock Ph.D. thesis, Pittsburgh, PA, USA.
\newblock AAI7619331.

\bibitem[{Mikolov and Zweig(2012)}]{mikolov_zweig_2012}
Tomas Mikolov and Geoffrey Zweig. 2012.
\newblock \href{https://doi.org/10.1109/slt.2012.6424228}{Context dependent
  recurrent neural network language model}.
\newblock {\em 2012 IEEE Spoken Language Technology Workshop (SLT)\/}
  \url{https://doi.org/10.1109/slt.2012.6424228}.

\bibitem[{Nederhof and Satta(2006)}]{nederhof_satta_2006}
Mark-Jan Nederhof and Giorgio Satta. 2006.
\newblock \href{https://doi.org/10.3115/1220835.1220879}{Estimation of
  consistent probabilistic context-free grammars}.
\newblock {\em Proceedings of the main conference on Human Language Technology
  Conference of the North American Chapter of the Association of Computational
  Linguistics -\/} \url{https://doi.org/10.3115/1220835.1220879}.

\bibitem[{Rad\'o(1962)}]{rad62}
Tibor Rad\'o. 1962.
\newblock On non-computable functions.
\newblock {\em Bell System Technical Journal\/} 41:877--884.

\bibitem[{S{\'e}nizergues(1997)}]{Sénizergues1997}
G{\'e}raud S{\'e}nizergues. 1997.
\newblock The equivalence problem for deterministic pushdown automata is
  decidable.
\newblock In {\em Proc. Automata, Languages and Programming: 24th International
  Colloquium\/}, Springer Berlin Heidelberg, pages 671--681.

\bibitem[{Siegelmann and Sontag(1995)}]{siegelmann}
Hava~T. Siegelmann and Eduardo~D. Sontag. 1995.
\newblock \href{https://doi.org/10.1006/jcss.1995.1013}{On the computational
  power of neural nets}.
\newblock {\em Journal of Computer and System Sciences\/} 50(1):132–150.
\newblock \url{https://doi.org/10.1006/jcss.1995.1013}.

\bibitem[{Simaan(1996)}]{simaan_1996}
Khalil Simaan. 1996.
\newblock \href{https://doi.org/10.3115/993268.993392}{Computational complexity
  of probabilistic disambiguation by means of tree-grammars}.
\newblock In {\em COLING\/}.
\newblock \url{https://doi.org/10.3115/993268.993392}.

\bibitem[{Sutskever et~al.(2014)Sutskever, Vinyals, and
  Le}]{sutskever2014sequence}
I.~Sutskever, O.~Vinyals, and Q.~V. Le. 2014.
\newblock Sequence to sequence learning with neural networks.
\newblock In {\em Proc. NIPS\/}.

\end{thebibliography}

\clearpage
\appendix
\section*{Appendix}
\begin{customthm}{11}
Identifying the best string of polynomial length in a consistent RNN is NP-complete and APX-hard.
\end{customthm}

\begin{proof}
  Clearly, we can guess an input string of polynomial length, run the RNN, and verify whether its weight exceeds the given bound in polynomial time.  Therefore the problem is trivially in NP.  For NP-hardness we now reduce from the 0-1 Integer Linear Programming Feasibility Problem to our problem:

\paragraph{0-1 Integer Linear Programming Feasibility:} Given: $n$ variables $x_1, x_2, ..., x_n$ which can only take values in $\{0, 1\}$, and $k$ constraints ($k \ge 1$): $\sum_{j=1}^{n}{A_{ij}x_j} - B_i \le 0$, $\forall i \in [k]$. $A \in \mathbb{Z}^{k \times n}$, $B \in \mathbb{Z}^k$. Return: Yes iff. there is a feasible solution $x=(x_1, x_2, ..., x_n) \in \{0, 1\}^n$ that satisfies all $k$ constraints.
\medskip

Suppose we are given an instance of the above problem. Construct an instance of the consistent best string with polynomial length problem with input $\langle R, c\rangle$, where:

\begin{enumerate}
\item $R$ is an RNN as follows: 
\[\resizebox{0.45\textwidth}{!}{$\Sigma=\{0, 1\}, \{1,\dots,n,g_1, \dots, g_k,l_1, \dots, l_k\} \subset N$}\] 
$$\forall j \in [n], 0 \le t \le j-1: h_{s,t}(j) = 0$$ 
$$\forall j \in [n], t \ge j: h_{s,t}(j) = x_j$$ 
$$\forall i \in [k], 0 \le t \le n: h_{s,t}(g_i) = h_{s,t}(l_i) = 0$$ 
\[\resizebox{0.45\textwidth}{!}{$\forall i \in [k], t \ge n+1: h_{s,t}(g_i) = \sigma\langle 1-\sum_{j=1}^{n}{A_{ij}h_{s,t-1}(j)} +B_i\rangle$}\]
\[\resizebox{0.45\textwidth}{!}{$\forall i \in [k], t \ge n+1: h_{s,t}(l_i) = \sigma\langle-\sum_{j=1}^{n}{A_{ij}h_{s,t-1}(j)} +B_i\rangle$}\]
Let $d=\sum_{i=1}^{k}{(h_{s,t}(g_i)-h_{s,t}(l_i))}$, $\delta_2 = \frac{1}{k+2}$. We pick a big enough positive rational number $\beta$ so that if we define $\delta_1=\frac{1}{2e^{\beta}+1}$, \begin{equation} \label{eq:1}
\delta_1 < (\frac{1-\delta_1}{2})^n \delta_2
\end{equation} When $t \ne n+1$, set $$E_{s,t}(0) = E_{s,t}(1)=\beta, E_{s,t}(\$)=0.$$ Therefore $$E'_{s,t}(0) = E'_{s,t}(1)=\frac{1-\delta_1}{2}, E'_{s,t}(\$)=\delta_1$$ When $t = n+1$, one can verify that we can set $$E_{s,t}(0) = E_{s,t}(1)=\gamma_d, E_{s,t}(\$)=0$$ where $$\gamma_d = \ln{\frac{\frac{1}{(d+1)\delta_2}-1}{2}}$$ so that $$E'_{s,t}(\$)=(d+1)\delta_2$$ since the range of $d$ is a finite set of values $\{0, 1, 2, \dots, k\}$.
\item $c = (\frac{1-\delta_1}{2})^n(k\delta_2)$
\end{enumerate}

From equation \ref{eq:1} we get $$-\ln{\delta_1} > -n\ln{\frac{1-\delta_1}{2}}+\ln{(k+2)},$$ so we can pick $\beta$ such that its length written in binary \[\resizebox{0.48\textwidth}{!}{$\log_2{\beta}=\log_2{(\ln{(\frac{1}{\delta_1}-1)}-\ln{2})}=O(\log{(-\ln{\delta_1})})$}\] is logarithmic in $n$ and $k$. So the weights in matrices $E, E'$ that produce $\beta$ are polynomial in $n$ and $k$. Same is true for the weights that produce $\gamma_d$. $c$ written in binary has length $$-\log_2{c}=-n\log_2{\frac{1-\delta_1}{2}}-\log_2{(k\delta_2)}$$ $$=n-n\log_2{(1-\delta_1)}-\log_2{\frac{k}{k+2}}$$ which is polynomial in $n$ and $k$. So our construction is polynomial.

We now prove that if we can solve the $\langle R, c\rangle$-instance of consistent best string of polynomial length in polynomial time, we can also solve the given instance of 0-1 Integer Linear Programming Feasibility in polynomial time.

By our design, at time $1 \le t \le n$, $R$ reads a binary string $x \in \{0, 1\}^n$ into neurons $1,2,\dots, n$ while predicting almost half-half probability for either 0 or 1 and infinitesimal probability $\delta_1$ for termination. Therefore no string with length less then $n$ has weight greater than $c$.

At time $t = n+1$, since $\sum_{j=1}^{n}{A_{ij}h_{s,t-1}(j)}-B_i$ is an integer, $h_{s,t}(g_i) - h_{s,t}(l_i)$ is the indicator for whether the $i$-th constraint is satisfied: $$h_{s,t}(g_i) - h_{s,t}(l_i)=\left\{
    \begin{array}{c l}	
        0 & \sum_{j=1}^{n}{A_{ij}x_j} -B_i \ge 1\\
        1 & \sum_{j=1}^{n}{A_{ij}x_j} -B_i \le 0
    \end{array}\right.$$
Therefore $d$ is the total number of clauses satisfied by a given setting of $x=(x_1, x_2, \dots, x_n)$ ($0 \le d \le k$). The termination probability at $t = n+1$ is $(d+1)\delta_2 = \frac{d+1}{k+2}$. If all $k$ clauses are satisfied, this setting of $x$ would have termination probability $1-\delta_2$ and therefore weight $(\frac{1-\delta_1}{2})^n(1-\delta_2)>c$. If fewer than $k$ clauses are satisfied, $x$ would have weight at most $c$. 

When $t \ge n+2$, $R$ continues to assign almost half-half probability for either 0 or 1 and infinitesimal probability for termination. Therefore any string of length greater than $n+1$ has a weight smaller than $\epsilon$. From that point on the output vector is constant, so the RNN is consistent. Notice that the weights of strings monotonically decrease with length except for at length $n$.

Therefore our construction ensures that the only length at which a string can have weight greater than $c$ is $n$. Thus, if there is any string whose weight is greater than $c$, the given instance of 0-1 Integer Linear Programming Problem is feasible; otherwise it is not.

Define the maximum number of clauses satisfied by all assignments of $x \in \{0, 1\}^n$: $$d_{max} = \max_{x \in \{0, 1\}^n}{d(x)}.$$ By our construction, when $d_{max} \ge 1$, the highest-weighted string will occur at length $n$, and has weight $(\frac{1-\delta_1}{2})^n(d_{max}+1)\delta_2=(\frac{1-\delta_1}{2})^n\frac{d_{max}+1}{k+2}$ which is proportional to $d_{max}+1$. The empty string has the highest weight among all strings of length not equal to $n$. Its weight is $\delta_1 < (\frac{1-\delta_1}{2})^n \delta_2$ which will always be less than the weight of any length-$n$ string corresponding to a setting of variables satisfying at least 1 constraint ($ \ge (\frac{1-\delta_1}{2})^n(2\delta_2)$).

Therefore, given any rational number $\zeta = \frac{\eta_1}{\eta_2}> 0$ ($\eta_1, \eta_2 \in [k]$), define $\delta(\zeta) = \frac{\eta_1}{\eta_2+1}$. If binary string $s$ is a $(1-\delta(\zeta))$-approximation to consistent best string with polynomial length, then reading $s$ as a vector of $n$ variables $x = (x_1, x_2, \dots, x_n)$, $x$ would be a  $(1-\zeta)$-approximation to 0-1 Integer Linear Programming Maximum Satisfiability (the optimization version of the problem, i.e., finding a setting of variables to satisfy the greatest number of constraints).

Thus our reduction is PTAS-reduction and preserves approximability. Since 0-1 Integer Linear Programming Feasibility is NP-complete and APX-complete, consistent best string of polynomial length is NP-complete and APX-hard, meaning there is no Polynomial-Time Approximation Scheme to find the best string bounded by polynomial length (i.e. the best we can hope for in polynomial time is a constant-factor approximation algorithm) unless P $=$ NP.
\end{proof}

\begin{customlemma}{14}
  \label{lm:help}
  For every $k \in \natp$
  \[ \prod_{t \in \nat} \frac{e^{k(t+1)}}{e^{k(t+1)} + 1} = \frac
    2{(-1;\, e^{-k})_\infty} \enspace, \]
  where $(-1; e^{-k})_\infty$ is the infinite $e^{-k}$-Pochhammer
  symbol.
\end{customlemma}

\begin{proof}
  \begin{align*}
    &\phantom{{}={}} \prod_{t \in \nat} \frac{e^{k(t+1)}}{e^{k(t+1)} +
      1}
      = \prod_{t \in \natp} \Bigl(\frac{e^{kt}}{e^{kt} + 1} \cdot
      \frac{e^{-kt}}{e^{-kt}} \Bigr) \\
    &= \prod_{t \in \natp} \frac 1{1+e^{-kt}}
      = \Biggl(\Bigl(\prod_{t \in \natp} \frac 1{1+e^{-kt}}
      \Bigr)^{-1} \Biggr)^{-1} \\
    &= \Bigl(\prod_{t \in \natp} (1+e^{-kt}) \Bigr)^{-1}
      = \Bigl(\frac 12 \prod_{t \in \nat} (1+e^{-kt})
      \Bigr)^{-1} \\
    &= \frac 2 {(-1;\, e^{-k})_\infty} \tag*{\qedhere}
  \end{align*}
\end{proof}

\begin{customlemma}{15}
  \label{lm:incon}
  Reconsider the RNN of Example 3
  and suppose that
  $h_{s, t}(n) - h_{s, t}(n') = 0$ for all $t \in \nat$.  Then
  \[ \sum_{s \in \Sigma^*} R(s) = 1 - \frac 2{(-1;\, e^{-2})_\infty}
    \approx 0.14 \]  
\end{customlemma}

\begin{proof}
  \begin{align*}
    &\phantom{{}={}} \sum_{s \in \Sigma^*} R(s)
      = \sum_{n \in \nat} R(a^n) \\
    &= \sum_{n \in \nat} \Bigl( \frac{e^{-(n+1)}}{e^{n+1} +
      e^{-(n+1)}} \cdot \prod_{t = 0}^{n-1} \frac{e^{t+1}}{e^{t+1} +
      e^{-(t+1)}} \Bigr) \\
    &= 1 - \prod_{t \in \nat} \frac{e^{2(t+1)}}{e^{2(t+1)} + 1}
      = 1 - \frac 2{(-1;\, e^{-2})_\infty} \\
    &\approx 0.14 \enspace,
  \end{align*}
  where the final equality utilizes Lemma~\ref{lm:help}.
\end{proof}

\begin{customlemma}{16}
  \label{lm:con}
  Reconsider the RNN of Example 3
  and suppose that
  there exists a time point~$T \in \nat$ such that for all $t \in
  \nat$ 
  \[ h_{s, t}(n) - h_{s, t}(n') =
    \begin{cases}
      1 & \text{if } t = T \\
      0 & \text{otherwise.}
    \end{cases} \]
  Then
  \[ \sum_{s \in \Sigma^*} R(s) = 1 \]
\end{customlemma}

\begin{proof}
  \begin{align*}
    &\phantom{{}={}} \sum_{s \in \Sigma^*} R(s) = \sum_{n \in \nat}
      R(a^n) \\
    &= \Bigl(\sum_{n = 0}^T R(a^n) \Bigr) + \Bigl(\sum_{n =
      T+1}^\infty R(a^n) \Bigr) \\
    &= \sum_{n = 0}^T \Bigl( \frac {e^{-(n+1)}}{e^{n+1} + e^{-(n+1)}}
      \cdot \prod_{t = 0}^{n-1} \frac{e^{t+1}}{e^{t+1} + e^{-(t+1)}}
      \Bigr) \\
    &\phantom{{}={}} + \sum_{n = T+1}^\infty \frac
      {e^{2n-3T-4}}{e^{n+1} + e^{2n-3T-4}} \\
    &\phantom{{}={}} \qquad {} \cdot \Bigl(\prod_{t = 0}^T
      \frac{e^{t+1}}{e^{t+1} + e^{-(t+1)}} \Bigr) \\
    &\phantom{{}={}} \qquad {} \cdot \Bigl(\prod_{t = T+1}^{n-1}
      \frac{e^{t+1}}{e^{t+1} + e^{2t-3T-4}} \Bigr) \\
    &= \sum_{n = 0}^T \Bigl( \frac 1{e^{2(n+1)} + 1} \cdot \prod_{t =
      0}^{n-1} \frac{e^{2(t+1)}}{e^{2(t+1)} + 1} \Bigr) \\
    &\phantom{{}={}} + \sum_{n = T+1}^\infty \frac
      {e^{n-3T-5}}{1 + e^{n-3T-5}} \cdot \Bigl(\prod_{t = 0}^T
      \frac{e^{2(t+1)}}{e^{2(t+1)} + 1} \Bigr) \\
    &\phantom{{}={}} \qquad {} \cdot \Bigl(\prod_{t = T+1}^{n-1}
      \frac{1}{1 + e^{t-3T-5}} \Bigr) \\
    &= 1 - \Bigl(\prod_{t = 0}^T \frac{e^{2(t+1)}}{e^{2(t+1)} +
      1}\Bigr) \\
    &\phantom{{}={}} \quad {} + \Bigl(\prod_{t = 0}^T
      \frac{e^{2(t+1)}}{e^{2(t+1)} + 1} \Bigr) \\
    &\phantom{{}={}} \qquad {} \cdot \sum_{n =
      T+1}^\infty \frac {e^{n-3T-5}}{1 + e^{n-3T-5}} \cdot \Bigl(\prod_{t = T+1}^{n-1}
      \frac{1}{1 + e^{t-3T-5}} \Bigr) \\
    &= 1 - \Bigl(\prod_{t = 0}^T \frac{e^{2(t+1)}}{e^{2(t+1)} +
      1}\Bigr) \\
    &\phantom{{}={}} \quad {} \cdot \Biggl(1 - 1 + \prod_{t =
      T+1}^\infty \frac 1{1 + e^{t-3T-5}} \Biggr) \\ 
    &= 1 - \Bigl(\prod_{t = 0}^T \frac{e^{2(t+1)}}{e^{2(t+1)} +
      1}\Bigr) \cdot \Bigl(\prod_{t = T+1}^\infty \frac 1{1 +
      e^{t-3T-5}} \Bigr) \\
    &\geq 1 - \Bigl(\prod_{t = 0}^T \frac{e^{2(t+1)}}{e^{2(t+1)} + 1}
      \Bigr) \cdot \Bigl(\prod_{t \in \nat} \frac 1{1 + e^t} \Bigr) \\
    &= 1 \tag*{\qedhere}
  \end{align*}
\end{proof}

\begin{customlemma}{17}
  \label{lm:sim}
  
\end{customlemma}

\begin{proof}
  We set
$h_{s, -1}(n) = h_{-1}(n)$ for all $n \in N$ and $h'_{s, -1}(n') =
h'_{-1}(n')$ for all $n' \in N'$.  Then trivially $h'_{s, -1}(n_1) -
h'_{s, -1}(n_2) = h_{-1}(n) - 0 = h_{s, -1}(n)$.  Moreover,
\begin{align*}
  h'_{s,t}(n_1)
  &= \sigma \langle V \cdot h'_{s,t-1} + V'_{s[t]} \rangle(n_1) \\
  &= \sigma \langle \sum_{n' \in N'} V(n_1, n') \cdot h'_{s,t-1}(n') \\
  &+     V'_{s[t]}(n_1) \rangle \\
  &= \sigma \langle \sum_{n' \in N} \bigl( V(n_1, n'_1) \cdot
    h'_{s,t-1}(n'_1) \\ 
    &+ V(n_1, n'_2) \cdot h'_{s,t-1}(n'_2) \bigr) +
    V'_{s[t]}(n_1) \rangle \\
  &= \sigma \langle \sum_{n' \in N} U(n, n') \cdot \bigl(
    h'_{s,t-1}(n'_1) - h'_{s,t-1}(n'_2) \bigr) \\
    &+ U'_{s[t]}(n) \rangle \\ 
  &= \sigma \langle \sum_{n' \in N} U(n, n') \cdot h_{s, t-1}(n') + U'_{s[t]}(n) \rangle
\end{align*} 
Similarly, we can show that
\[ h'_{s,t}(n_2) = \sigma \langle \sum_{n' \in N} U(n, n') \cdot h_{s,
    t-1}(n') + U'_{s[t]}(n) - 1\rangle \]
Hence $h'_{s, t}(n_1) - h'_{s, t}(n_2) = h_{s, t}(n)$ as required.
\end{proof}
\end{document}